\begin{document}
\frontmatter          
\pagestyle{headings}  
\addtocmark{Hamiltonian Mechanics} 

\mainmatter              
\title{On Polynomial Kernelization of {\normalfont\scshape $\mathcal{H}$\textbf{-free Edge Deletion}}}
%
%
\author{N. R. Aravind\hspace{0.3in} R. B. Sandeep\thanks{supported by TCS Research Scholarship}\hspace{0.3in} Naveen Sivadasan}
%
%

%
\institute{Department of Computer Science \& Engineering\\
Indian Institute of Technology Hyderabad, India\\
\email{\{aravind,cs12p0001,nsivadasan\}\makeatletter@\makeatother iith.ac.in}}

\maketitle              

\begin{abstract}
For a set of graphs $\mathcal{H}$, the \textsc{$\mathcal{H}$-free Edge Deletion} problem asks to find whether there exist at most $k$ edges in the input graph whose deletion results in a graph without any induced copy of $H\in\mathcal{H}$. 
In \cite{cai1996fixed}, it is shown that the problem is fixed-parameter tractable if $\mathcal{H}$ is of finite cardinality. However, it is proved in \cite{cai2013incompressibility} that if $\mathcal{H}$ is a singleton set containing $H$, 
  for a large class of $H$, there exists no polynomial kernel unless $coNP\subseteq NP/poly$. 
In this paper, we present a polynomial kernel for this problem for any fixed finite set $\mathcal{H}$ of connected graphs and when the input graphs are of bounded degree. 
We note that there are \textsc{$\mathcal{H}$-free Edge Deletion} problems which remain NP-complete even for the bounded degree input graphs, for example \textsc{Triangle-free Edge Deletion}\cite{brugmann2009generating} and \textsc{Custer Edge Deletion($P_3$-free Edge Deletion)}\cite{komusiewicz2011alternative}.
When $\mathcal{H}$ contains $K_{1,s}$, we obtain a stronger result - a polynomial kernel for $K_t$-free input graphs (for any fixed $t> 2$).
We note that for $s>9$, there is an incompressibility result for \textsc{$K_{1,s}$-free Edge Deletion}  for general graphs \cite{cai2012polynomial}.
Our result provides first polynomial kernels for \textsc{Claw-free Edge Deletion} and \textsc{Line Edge Deletion} for $K_t$-free input graphs which are NP-complete even for $K_4$-free graphs\cite{yannakakis1981edge} and were raised as open problems in \cite{cai2013incompressibility,open2013worker}.
\end{abstract}
\section{Introduction}
For a graph property $\Pi$, the \textsc{$\Pi$ Edge Deletion} problem asks whether there exist at most $k$ edges such that deleting them from the input graph
 results in a graph with property $\Pi$. Numerous studies have been done on edge deletion problems from 1970s onwards dealing with various aspects 
such as hardness
\cite{yannakakis1981edge,garey1974some,el1988complexity,natanzon2001complexity,alon2009additive,goldberg1995four,margot1994some,karp72,brugmann2009generating,shamir2004cluster},
polynomial-time algorithms\cite{natanzon2001complexity,hadlock1975finding,shamir2004cluster},
approximability \cite{natanzon2001complexity,alon2009additive,shamir2004cluster},
fixed-parameter tractability\cite{cai1996fixed,gramm2003graph},
polynomial problem kernels\cite{guo2007problem,brugmann2009generating,paul2012non,gramm2003graph} and 
incompressibility\cite{cai2013incompressibility,kratsch2009two,cai2012polynomial}.

There are not many generalized results on the NP-completeness of edge deletion problems. This is in contrast with the classical result by 
Lewis and Yannakakis\cite{lewis1980node} on the vertex counterparts which says that \textsc{$\Pi$ Vertex Deletion} problems are NP-complete if $\Pi$ is non-trivial
and hereditary on induced subgraphs. By a result of Cai\cite{cai1996fixed}, the \textsc{$\Pi$ Edge Deletion} problem is
fixed-parameter tractable for any hereditary property $\Pi$ that is characterized by a finite set of forbidden induced subgraphs. We observe that polynomial problem kernels have been found only for a few parameterized \textsc{$\Pi$ Edge Deletion} problems. 

In this paper, we study a subset of \textsc{$\Pi$ Edge Deletion} problems known as \textsc{$\mathcal{H}$-free Edge Deletion} problems where 
$\mathcal{H}$ is a set of graphs. The objective is 
to find whether there exist at most $k$ edges in the input graph such that deleting them results in a graph with no induced copy of $H\in\mathcal{H}$. In the natural
parameterization of this problem, the parameter is $k$. 
In this paper, we give a polynomial problem kernel for parameterized version of \textsc{$\mathcal{H}$-free Edge Deletion} where 
$\mathcal{H}$ is any fixed finite set of 
connected graphs and when the input graphs are of bounded degree. In this context, we note that \textsc{Triangle-free Edge Deletion}\cite{brugmann2009generating} and \textsc{Custer Edge Deletion($P_3$-free Edge Deletion)}\cite{komusiewicz2011alternative} are NP-complete even for bounded degree 
input graphs. We also note that, under the complexity theoretic assumption $coNP\not\subseteq NP/poly$, there exist no polynomial problem kernels for the \textsc{$H$-free Edge Deletion} problems when $H$ is 3-connected but not complete, or when $H$ is a path or cycle of at least 4 edges\cite{cai2013incompressibility}. When the input graph has maximum degree at most $\Delta$ and if the maximum diameter of graphs in $\mathcal{H}$ is $D$, then the number of vertices in the kernel we obtain is at most $2\Delta^{2D+1}\cdot k^{pD+1}$ where $p=\log_{\frac{2\Delta}{2\Delta-1}}\Delta$. Our kernelization consists of a single rule which removes vertices of the input graph that are `far enough' from all induced $H\in \mathcal{H}$ in $G$. 

When $\mathcal{H}$ contains $K_{1,s}$, we obtain a stronger result - a polynomial kernel for $K_t$-free input graphs (for any fixed $t> 2$).
Let $s>1$ be the least integer such that $K_{1,s}\in \mathcal{H}$. Then the number of 
vertices in the kernel we obtain is at most $8d^{3D+1}\cdot k^{pD+1}$ where $d=R(s,t-1)-1$, $R(s,t-1)$ is the Ramsey number and $p=\log_{\frac{2d}{2d-1}}d$.
We note that \textsc{Claw-free Edge Deletion} and \textsc{Line Edge Deletion} are NP-complete even for $K_4$-free input graphs\cite{yannakakis1981edge}. 
As a corollary of our result, we obtain the first polynomial kernels for these problems when the input graphs are $K_t$-free for any fixed 
$t>2$. The existence of a polynomial kernel for \textsc{Claw-free Edge Deletion} and \textsc{Line Edge Deletion} were raised as open problems in \cite{cai2013incompressibility,open2013worker}.
We note that for $s>9$, there is an incompressibility result for \textsc{$K_{1,s}$-free Edge Deletion}  for general graphs \cite{cai2012polynomial}. 
\subsection{Related Work}
Here, we give an overview of various results on edge deletion problems.
\paragraph{NP-completeness:} It has been proved that \textsc{$\Pi$ Edge Deletion} problems
are NP-complete if $\Pi$ is one of the following properties: without cycle of any fixed length $l\geq 3$, without any cycle of length at most $l$ for any fixed $l\geq 4$, connected with maximum degree $r$ for every fixed
$r\geq 2$, outerplanar, line graph, bipartite, comparability\cite{yannakakis1981edge}, claw-free (implicit in the proof of NP-completeness of the \textsc{Line Edge Deletion} problem in \cite{yannakakis1981edge}), $P_l$-free for any fixed $l\geq 3$\cite{el1988complexity}, circular-arc, chordal, chain, perfect, split, AT-free\cite{natanzon2001complexity}, interval\cite{goldberg1995four},
threshold\cite{margot1994some} and 
complete\cite{karp72}.

\paragraph{Fixed-parameter Tractability and Kernelization:} Cai proved in \cite{cai1996fixed} that parameterized \textsc{$\Pi$ Edge Deletion} problem is fixed-parameter tractable if $\Pi$ is a hereditary property characterized by a finite set of forbidden induced subgraphs. Hence 
\textsc{$\mathcal{H}$-free Edge Deletion} is fixed-parameter tractable for any finite set of graphs $\mathcal{H}$.
Polynomial problem
kernels are known for chain, split, threshold\cite{guo2007problem}, triangle-free\cite{brugmann2009generating}, cograph\cite{paul2012non} and cluster\cite{gramm2003graph} edge deletions. It is proved in \cite{cai2013incompressibility} that for 3-connected $H$, $\textsc{H-free Edge Deletion}$ admits no polynomial kernel if and only if $H$ is not a complete graph, under the assumption $coNP\not\subseteq NP/poly$. Under the same assumption, it is proved in \cite{cai2013incompressibility} that for $H$ being a path or cycle, $\textsc{H-free Edge Deletion}$ admits no polynomial kernel if and only if $H$ has at least 4 edges.
Unless $NP\subseteq coNP/poly$, $\textsc{H-free Edge Deletion}$ admits no polynomial kernel if $H$ is $K_1 \times\ (2K_1\cup 2K_2)$ \cite{kratsch2009two}.

\section{Preliminaries and Basic Results}
We consider only simple graphs. For a set of graphs $\mathcal{H}$, a graph $G$ is $\mathcal{H}$-free if there is no induced copy of $H\in\mathcal{H}$ in $G$. For $V'\subseteq V(G)$,
$G\setminus V'$ denotes the graph $(V(G)\setminus V', E(G)\setminus E')$ where $E'\subseteq E(G)$ is the set of edges incident to vertices in $V'$. Similarly, for $E'\subseteq E(G)$,
$G\setminus E'$ denotes the graph $(V(G), E(G)\setminus E')$. For any edge set $E'\subseteq E(G)$, $V_{E'}$ denotes the set of vertices 
incident to the edges in $E'$. For any $V'\subseteq V(G)$, the closed neighbourhood of $V'$, $N_G[V']=\{v: v\in V'\ \text{or}\ (u,v)\in E(G)\ \text{for some}\ u\in V'\}$. In a graph $G$, distance from a vertex $v$ to a set of vertices $V'$ is the shortest among the distances from $v$ to the vertices in $V'$. 

A parameterized problem is \textit{fixed-parameter tractable}(FPT) if there exists an algorithm to solve it which runs in time $O(f(k)n^c)$ where $f$ is a computable function, $n$ is the input size, $c$ is a constant and $k$ is the parameter. The idea is to solve the problem efficiently for small parameter values. 
A related notion is \textit{polynomial kernelization} where the parameterized problem instance is reduced in polynomial (in $n+k$) time to a polynomial (in $k$) sized 
instance of the same problem called \textit{problem kernel} such that the 
original instance is a yes-instance if and only if the problem kernel is a yes-instance. We refer to \cite{downey2013fundamentals} for an exhaustive treatment on these topics. 
A kernelization rule is \textit{safe} if the answer to the problem instance does not change after the application of the rule. 

In this paper, we consider \textsc{$\mathcal{H}$-free Edge Deletion}\footnote{we leave the prefix `parameterized' henceforth as it is evident from the context} which is defined as given below. 

\vspace{0.1cm}

\fbox{\parbox{11cm}{
\textsc{$\mathcal{H}$-free Edge Deletion}\\
\textbf{Instance}: A graph $G$ and a positive integer $k$.\\
\textbf{Problem}: Does there exist $E'\subseteq E(G)$ with $|E'|\leq k$ such that $G\setminus E'$ does not contain $H\in\mathcal{H}$ as an induced subgraph.\\
\textbf{Parameter}: $k$
}}

\vspace{0.1cm}

We define an \textit{$\mathcal{H}$ deletion set (HDS)} of a graph $G$ as a set $M\subseteq E(G)$ such that $G\setminus M$ is $\mathcal{H}$-free. The \textit{minimum $\mathcal{H}$ deletion set (MHDS)} is an HDS with smallest cardinality. 
We define a partition of an MHDS $M$ of $G$ as follows.

$M_1=\{e: e\in M\ \text{and}\ e\ \text{is part of an induced}\ H\in \mathcal{H}\ \text{in}\ G\}.$

$M_j= \{e: e\in M\setminus \bigcup_{i=1}^{i=j-1}M_i\ \text{and}\ e\ \text{is part of an induced}\ H\in \mathcal{H}\ \text{in}\ G\setminus \bigcup_{i=1}^{i=j-1}M_i\}, \text{for}\ j>1$.

We define the \textit{depth} of an MHDS $M$ of $G$, denoted by $l_M$, as the least integer such that $|M_i|>0$ for all $1\leq i\leq l_M$   and $|M_i|=0$ for all $i>l_M$. Proposition~\ref{prop:depth} shows that this notion is well defined.

\begin{proposition}\label{prop:depth}
\begin{enumerate}
\item $\{M_j\}$ forms a partition of $M$.
\item There exists $l_M\geq 0$ such that $|M_i|>0$ for $1\leq i\leq l_M$ and $|M_i|=0$ for $i > l_M$.
\end{enumerate}
\end{proposition}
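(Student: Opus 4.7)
The plan is to prove both parts by analyzing the iterative peeling that defines the layers $M_j$, with pairwise disjointness immediate from the definition ($M_j \subseteq M \setminus \bigcup_{i<j} M_i$) and the real work being (i) that only finitely many layers are nonempty with no gaps, and (ii) that the layers exhaust $M$.

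First I would prove a monotonicity lemma: if $M_j = \emptyset$ for some $j$, then $M_{j'} = \emptyset$ for every $j' \geq j$. The observation is that $M_j = \emptyset$ forces $\bigcup_{i \leq j} M_i = \bigcup_{i < j} M_i$, so the set-theoretic condition defining $M_{j+1}$ is literally the same as the one defining $M_j$; iterating gives emptiness from $j$ onward. Combined with the trivial bound $\sum_j |M_j| \leq |M|$ coming from disjointness, only finitely many $M_j$ are nonempty and they occur contiguously starting at index $1$, which yields a well-defined $l_M$ and hence part (2), provided part (1) is in hand.

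The main obstacle, and the only step that actually uses the minimality of $M$, is establishing coverage $\bigcup_j M_j = M$. Suppose for contradiction that $L := M \setminus \bigcup_j M_j$ is nonempty, and let $j^*$ be the smallest index with $M_{j^*} = \emptyset$, so that $\bigcup_j M_j = \bigcup_{i < j^*} M_i =: M^*$. By the defining condition of $M_{j^*}$, no edge of $L$ belongs to any induced copy of $H \in \mathcal{H}$ in $G \setminus M^*$. I would then argue that $G \setminus M^*$ is already $\mathcal{H}$-free: for any induced copy of some $H \in \mathcal{H}$ in $G \setminus M^*$, further deleting the edges of $L$ to reach $G \setminus M$ must destroy this copy (because $M$ is an HDS), so some edge of the copy must lie in $L$, contradicting the previous observation. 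Hence $M^*$ is itself an HDS strictly smaller than $M$, contradicting minimality; so $L = \emptyset$, completing the partition claim and finishing the proposition.
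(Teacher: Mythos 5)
Your proposal is correct and follows essentially the same route as the paper: pairwise disjointness from the definition, termination of the layering via the ``once empty, always empty'' observation, and coverage of $M$ by contradiction with the minimality of $M$ (a leftover edge would make $\bigcup_j M_j$ a strictly smaller HDS). The only difference is that you spell out explicitly why $G\setminus\bigcup_j M_j$ is $\mathcal{H}$-free, a step the paper's proof leaves implicit.
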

\begin{proof}
If $i\not = j$ and $M_i$ and $M_j$ are nonempty, then $M_i\cap M_j=\emptyset$. For $i\geq 1$, $M_i\subseteq M$. Assume there is an edge $e\in M$ and $e\notin \bigcup M_j$. Delete all edges in $\bigcup M_j$ from $G$. What remains is an $\mathcal{H}$-free graph. As $M$ is an MHDS, there can not exist such an edge $e$. Now let $j$ be the smallest integer such that $M_j$ is empty. Then from definition, for all $i>j$, $|M_i|=0$. Therefore 
$l_M=j-1$.
\end{proof}\qed

We observe that for an $\mathcal{H}$-free graph, the only MHDS $M$ is $\emptyset$ and hence $l_M=0$. For an MHDS $M$ of $G$ with a depth $l_M$, 
we define the following terms.

$S_j=\bigcup_{i=j}^{i=l_M}M_j$ for $1\leq j\leq l_M+1$.   

$T_j= M\setminus S_{j+1}$ for $0\leq j\leq l_M$.

$V_{\mathcal{H}}(G)$ is the set of all vertices part of some induced $H\in \mathcal{H}$ in $G$.

We observe that $S_1=T_{l_M}=M$, $S_{l_M} = M_{l_M}$, $T_1= M_1$ and $S_{l_M+1}=T_0=\emptyset$.
\begin{proposition}\label{prop:stilleasy}
For a graph $G$, let $E'\subseteq E(G)$ such that at least one edge in every induced $H\in \mathcal{H}$ in $G$ is in $E'$. Then, at least one vertex in every induced $H\in \mathcal{H}$ in 
$G\setminus E'$ is in $V_{E'}$.
\end{proposition}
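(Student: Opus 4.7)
The plan is a short argument by contradiction. Suppose, toward a contradiction, that there is an induced copy of some $H\in\mathcal{H}$ in $G\setminus E'$ whose vertex set $V(H)$ is disjoint from $V_{E'}$. I would first observe what disjointness from $V_{E'}$ buys us: by definition of $V_{E'}$, no edge of $E'$ has an endpoint in $V(H)$. In particular, no edge of $E'$ has \emph{both} endpoints in $V(H)$, so the subgraph of $G\setminus E'$ induced on $V(H)$ is exactly the same as the subgraph of $G$ induced on $V(H)$.

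Next I would use that equality to lift the induced copy back to $G$. Since the same vertex set induces the same graph in $G$ and in $G\setminus E'$, the set $V(H)$ also induces a copy of $H$ in $G$. By the hypothesis on $E'$ (that every induced $H\in\mathcal{H}$ in $G$ contains at least one edge of $E'$), there must be an edge $e\in E'$ with both endpoints in $V(H)$. But then both endpoints of $e$ lie in $V_{E'}$, contradicting $V(H)\cap V_{E'}=\emptyset$.

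The proof is essentially a one-step observation, so there is no real obstacle; the only thing to be careful about is precisely that disjointness from $V_{E'}$ (a vertex condition) is strong enough to guarantee that \emph{no} edge of $E'$ can touch $V(H)$, which in turn ensures that induced copies in $G\setminus E'$ restricted to $V(H)$ are identical to those in $G$. Once that identification is made, the hypothesis on $E'$ immediately produces the contradiction.
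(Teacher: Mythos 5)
Your proof is correct and follows essentially the same route as the paper's: assume an induced copy in $G\setminus E'$ avoids $V_{E'}$, observe that the same vertex set then induces a copy of $H$ in $G$, and derive a contradiction with the hypothesis on $E'$. You merely spell out the intermediate step (that disjointness from $V_{E'}$ forces the induced subgraphs on that vertex set to coincide in $G$ and $G\setminus E'$) which the paper leaves implicit.
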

\begin{proof}
Assume that there exists an induced $H\in \mathcal{H}$ in $G\setminus E'$ with the vertex set $V'$. For a contradiction, assume that $|V'\cap V_{E'}|=0$. Then, $V'$ induces a copy of $H$ in $G$. Hence, $E'$ must contain some of its edges. 
\end{proof}\qed

\begin{lemma}\label{lem:dist}
Let $G$ be the input graph of an \textsc{$\mathcal{H}$-free Edge Deletion} problem instance where $\mathcal{H}$ is a set of connected graphs with diameter at most $D$. Let $M$ be an MHDS of $G$. Then, every vertex in $V_M$ is at a distance at most $(l_M-1)D$ from $V_{\mathcal{H}}(G)$ in $G$ .
\end{lemma}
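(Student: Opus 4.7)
The plan is to prove the lemma by induction on $j$ the stronger statement: every vertex in $V_{M_j}$ is at distance at most $(j-1)D$ from $V_{\mathcal{H}}(G)$ in $G$. Since $V_M = \bigcup_{j=1}^{l_M} V_{M_j}$ by Proposition~\ref{prop:depth}, taking the maximum over $j \leq l_M$ gives the claimed bound.

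For the base case $j=1$, the definition of $M_1$ says every edge of $M_1$ lies in some induced $H\in\mathcal{H}$ in $G$, so $V_{M_1}\subseteq V_{\mathcal{H}}(G)$, giving distance $0$.

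For the inductive step, assume the statement holds for $j-1$ and fix $v\in V_{M_j}$. The first sub-step is to show that \emph{every} induced copy of some $H\in\mathcal{H}$ in $G\setminus T_{j-2}$ contains an edge of $M_{j-1}$: since $G\setminus M$ is $\mathcal{H}$-free, any such copy must contain at least one edge of $M\setminus T_{j-2}=S_{j-1}$, and any edge of $S_{j-1}$ that is part of an induced $H\in\mathcal{H}$ in $G\setminus T_{j-2}$ is, by the very definition of $M_{j-1}$, in $M_{j-1}$. The second sub-step is then to apply Proposition~\ref{prop:stilleasy} to the graph $G\setminus T_{j-2}$ with $E'=M_{j-1}$: every induced copy of $H\in\mathcal{H}$ in $(G\setminus T_{j-2})\setminus M_{j-1}=G\setminus T_{j-1}$ contains at least one vertex of $V_{M_{j-1}}$. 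Now $v\in V_{M_j}$ lies in some induced copy $H'$ of an $H\in\mathcal{H}$ in $G\setminus T_{j-1}$, and that copy contains some $u\in V_{M_{j-1}}$; since $H$ is connected with diameter at most $D$ and $H'$ is an induced subgraph of $G\setminus T_{j-1}$ (hence of $G$), we have $\mathrm{dist}_G(u,v)\leq D$. The induction hypothesis gives $\mathrm{dist}_G(u,V_{\mathcal{H}}(G))\leq (j-2)D$, and the triangle inequality yields $\mathrm{dist}_G(v,V_{\mathcal{H}}(G))\leq (j-1)D$, completing the induction.

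The main technical point to get right is the first sub-step: verifying that the partition $\{M_i\}$ interacts correctly with the chain of graphs $G\setminus T_0\supseteq G\setminus T_1\supseteq\cdots$, so that Proposition~\ref{prop:stilleasy} can be invoked at the appropriate level. Everything else is routine once the right graph and edge set are identified for that invocation.
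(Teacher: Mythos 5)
Your proof is correct and follows essentially the same route as the paper's: chaining the bound $\mathrm{dist}(V_{M_j},V_{M_{j-1}})\leq D$ via Proposition~\ref{prop:stilleasy} applied to $G\setminus T_{j-2}$ with $E'=M_{j-1}$, and anchoring at $V_{M_1}\subseteq V_{\mathcal{H}}(G)$. Your version merely packages this as an explicit induction and spells out the ``at least one edge of every induced $H$ in $G\setminus T_{j-2}$ lies in $M_{j-1}$'' step that the paper dismisses with ``from definition''.
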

\begin{proof}
For $2\leq j\leq l_M$, from definition, at least one edge in every induced $H\in\mathcal{H}$ in $G\setminus T_{j-2}$ is in $M_{j-1}$. Hence by
Proposition~\ref{prop:stilleasy}, at least one vertex in every induced $H\in \mathcal{H}$ in $G\setminus T_{j-1}$ is in $V_{M_{j-1}}$. By definition, every vertex in $V_{M_j}$ is part of some induced $H\in\mathcal{H}$ in $G\setminus T_{j-1}$. This implies every vertex in 
$V_{M_j}$ is at a distance at most $D$ from $V_{M_{j-1}}$. Hence every vertex in $V_{M_{l_M}}$ is at a distance at most $(l_M-1)D$ from $V_{M_1}$. 
By definition, $V_{M_1}\subseteq V_{\mathcal{H}}(G)$. Hence the proof.
\end{proof}\qed

\begin{lemma}\label{lem:step}
Let $G$ be a graph with maximum degree at most $\Delta$ and $M$ be an MHDS of $G$.  Then, for $1\leq j\leq l_M$, $(2\Delta-1)\cdot |M_j|\geq |S_{j+1}|$.
\end{lemma}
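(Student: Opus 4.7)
The plan is to reduce the inequality to a local incidence claim and then obtain the bound by a degree-sum count. Specifically, I will argue that every edge of $S_{j+1}$ shares an endpoint with at least one edge of $M_j$. Granted this, the bound follows quickly: since $|V_{M_j}|\le 2|M_j|$ and each vertex of $G$ has degree at most $\Delta$, the number of edges of $G$ incident to $V_{M_j}$ is at most $2\Delta\,|M_j|$; and because each of the $|M_j|$ edges in $M_j$ itself is incident to $V_{M_j}$ (using both endpoints), there are at most $2\Delta\,|M_j|-|M_j|=(2\Delta-1)|M_j|$ edges incident to $V_{M_j}$ that are not themselves in $M_j$. Since $S_{j+1}$ is disjoint from $M_j$ by definition of the partition, the incidence claim then gives $|S_{j+1}|\le (2\Delta-1)\,|M_j|$.

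To prove the incidence claim, I will argue by contradiction. Suppose some $e'=(u,v)\in S_{j+1}$ has neither $u$ nor $v$ in $V_{M_j}$. Since $M$ is an MHDS, $M\setminus\{e'\}$ is not an HDS, so there is an induced copy of some $H\in\mathcal{H}$ in $G\setminus(M\setminus\{e'\})$ containing $e'$; let $V^{(e')}$ be its vertex set. The key structural observation is that, in $G\setminus T_{j-1}$, the subgraph induced on $V^{(e')}$ equals $H$ together with a set $F=(S_j\setminus\{e'\})\cap\binom{V^{(e')}}{2}$ of extra edges; and because $e'\notin M_j$, the vertex set $V^{(e')}$ cannot induce any member of $\mathcal{H}$ in $G\setminus T_{j-1}$, forcing $F\neq\emptyset$.

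The contradiction should be drawn from the edges of $F$. Each $f\in F$ lies in $M_j$ or in $S_{j+1}$. If some $f\in F\cap M_j$ is incident to $u$ or $v$, the assumption is violated immediately. Otherwise, the remaining $M_j$-edges in $F$ sit between vertices of $V^{(e')}\setminus\{u,v\}$, and the $S_{j+1}$-edges in $F$ have to be handled by applying the same analysis to them (they are each in some induced $H$ in $G\setminus (M\setminus\{f\})$) and ultimately producing, via an exchange on $M$, either a strictly smaller HDS than $M$ or a smaller ``depth witness'' which again contradicts minimality of $M$.

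The main obstacle is precisely this last step. A single witness $H$ on $V^{(e')}$ yields only the weak conclusion that $V^{(e')}$ meets $V_{M_j}$ (via the endpoints of some $f\in F\cap M_j$), which does not directly say $u$ or $v$ lies in $V_{M_j}$. Handling the subcase where $F$ is entirely contained in $S_{j+1}$, or where the $M_j$-edges of $F$ avoid $\{u,v\}$, will require a careful exchange argument that swaps $e'$ out of $M$ against an edge of the witness $H$, tracking how such a swap interacts with the layer decomposition $M_1,\ldots,M_{l_M}$ to finally contradict either the cardinality of $M$ or the definition of $M_j$.
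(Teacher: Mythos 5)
There is a genuine gap: your entire argument rests on the incidence claim that every edge of $S_{j+1}$ shares an endpoint with some edge of $M_j$, and you do not prove it --- your final paragraph concedes that the cases where the witness's extra edges $F$ lie in $S_{j+1}$, or where the $M_j$-edges of $F$ avoid $\{u,v\}$, are unresolved. Worse, the claim is almost certainly false as stated. What the layer structure actually gives (via Proposition~\ref{prop:stilleasy}) is that every induced copy of $H\in\mathcal{H}$ in $G\setminus T_j$ contains \emph{some} vertex of $V_{M_j}$; an edge of $M_{j+1}$ lies in such a copy but may sit anywhere inside it, so it is only guaranteed to be within distance about $D$ of $V_{M_j}$, not adjacent to it --- this is precisely the weaker conclusion of Lemma~\ref{lem:dist}. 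Edges of $M_{j+2},M_{j+3},\dots$, which also belong to $S_{j+1}$, drift further still. Indeed, if your incidence claim held, Lemma~\ref{lem:dist} would immediately improve from distance $(l_M-1)D$ to $l_M-1$, which the paper does not (and cannot in general) claim; so a local, edge-by-edge argument of this kind cannot work once the graphs in $\mathcal{H}$ have diameter at least $2$.

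The paper's proof avoids any local claim by a single global exchange. Let $M'_j$ be \emph{all} edges of $G\setminus T_{j-1}$ incident to $V_{M_j}$. Since $M_j$ meets every induced copy of $H\in\mathcal{H}$ in $G\setminus T_{j-1}$ and each $H$ is connected, deleting $M'_j$ isolates a vertex of every such copy, so $(G\setminus T_{j-1})\setminus M'_j$ is $\mathcal{H}$-free and $T_{j-1}\cup M'_j$ is an HDS of $G$. Minimality of $M$ then forces $|M'_j|\ge |S_j|=|M_j|+|S_{j+1}|$, while $|M'_j|\le \Delta|V_{M_j}|\le 2\Delta|M_j|$, which yields the stated inequality. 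The degree count you already wrote is essentially the bound $|M'_j|\le 2\Delta|M_j|$; what is missing in your proposal, and what the minimality of $M$ supplies, is the comparison of $|M'_j|$ against the whole of $|S_j|$ rather than an edge-by-edge adjacency statement.
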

\begin{proof}
For $1\leq j\leq l_M$, from definition, $M_j$ has at least one edge from every induced $H\in \mathcal{H}$ in $G\setminus T_{j-1}$. Let 
$M'_j$ be the set of edges incident to vertices in $V_{M_j}$ in $G\setminus T_{j-1}$. We observe that $(G\setminus T_{j-1})\setminus M'_j$ is $\mathcal{H}$-free and hence $|T_{j-1}\cup M'_j|$ is an HDS of $G$. Clearly, $|M'_j|\leq \Delta |V_{M_j}|\leq 2\Delta |M_j|$. Since $M$ is an MHDS, 
$|T_{j-1}\cup M'_j| = |T_{j-1}|+|M'_j|\geq |M|=|T_{j-1}|+|S_j|$. Therefore
$|M'_j|\geq|S_j|$. Hence, $2\Delta |M_j|\geq|S_j|=|M_j|+|S_{j+1}|$.
\end{proof}\qed

Now we give an upper bound for the depth of an MHDS in terms of its size and maximum degree of the graph.
\begin{lemma}\label{lem:lower}
Let $M$ be an MHDS of $G$. If the maximum degree of $G$ is at most $\Delta>0$, 
then $l_M\leq 1+\log_{\frac{2\Delta}{2\Delta-1}}|M|$.
\end{lemma}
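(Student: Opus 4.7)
The plan is to iterate Lemma \ref{lem:step} to get a geometric decay on the sequence $|S_j|$ and then use the fact that $|S_{l_M}| \geq 1$ to bound $l_M$ logarithmically in $|M|$.

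First I would rewrite Lemma \ref{lem:step} purely in terms of $S_j$. Since $\{M_j\}$ partitions $M$, we have $|S_j| = |M_j| + |S_{j+1}|$, so Lemma \ref{lem:step} gives $(2\Delta-1)|M_j| \geq |S_j| - |M_j|$, i.e.\ $|M_j| \geq |S_j|/(2\Delta)$. Substituting back yields the recurrence
\[
|S_{j+1}| \;=\; |S_j| - |M_j| \;\leq\; \frac{2\Delta - 1}{2\Delta}\,|S_j|
\]
for every $1 \leq j \leq l_M$.

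Next I would iterate this inequality starting from $|S_1| = |M|$, obtaining $|S_{l_M}| \leq |M| \cdot \bigl(\tfrac{2\Delta-1}{2\Delta}\bigr)^{l_M - 1}$. Since $S_{l_M} = M_{l_M}$ is nonempty by the definition of depth, $|S_{l_M}| \geq 1$. Rearranging gives $\bigl(\tfrac{2\Delta}{2\Delta-1}\bigr)^{l_M - 1} \leq |M|$, and taking logarithms in base $\tfrac{2\Delta}{2\Delta-1}$ yields the claimed bound $l_M \leq 1 + \log_{\frac{2\Delta}{2\Delta-1}} |M|$.

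There is no real obstacle here; the work has already been done in Lemma \ref{lem:step}. The only subtlety is being careful that the off-by-one in the exponent comes out right (using $|S_{l_M}| \geq 1$ rather than $|S_{l_M+1}| \geq 1$, since $S_{l_M+1} = \emptyset$), which is why the bound is $1 + \log_{\cdots}|M|$ rather than $\log_{\cdots}|M|$.
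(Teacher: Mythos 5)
Your proposal is correct and follows essentially the same route as the paper: both iterate Lemma~\ref{lem:step} to obtain the geometric relation $|S_{j+1}|\leq \frac{2\Delta-1}{2\Delta}|S_j|$ (the paper writes it equivalently as $|S_j|\geq \frac{2\Delta}{2\Delta-1}|S_{j+1}|$) and then use $|S_{l_M}|\geq 1$ with $|S_1|=|M|$. Your handling of the off-by-one in the exponent matches the paper's computation exactly.
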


\begin{proof}
The statement is clearly true when $l_M\leq 1$. Hence assume that $l_M\geq 2$. The result follows from repeated application of Lemma~\ref{lem:step}.
\begin{align*}
|M| &= |S_1|=|M_1|+|S_2|\geq \frac{|S_2|}{2\Delta-1}+|S_2|\\
  &\geq |S_{l_M}|\left(\dfrac{2\Delta}{2\Delta-1}\right)^{l_M-1}\\
  &\geq \left(\dfrac{2\Delta}{2\Delta-1}\right)^{l_M-1} \hspace{0.5cm} \text{[$\because |S_{l_M}|\geq 1$]}
\end{align*}
\end{proof}\qed

\begin{corollary}\label{cor:l}
Let $(G,k)$ be a yes-instance of \textsc{$\mathcal{H}$-free Edge Deletion} where $G$ has maximum degree at most $\Delta>0$. For any MHDS $M$
of $G$, 
$l_M\leq 1 + \log_{\frac{2\Delta}{2\Delta-1}}k$. 
\end{corollary}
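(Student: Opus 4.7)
The plan is to obtain this as an immediate consequence of Lemma~\ref{lem:lower}, so essentially no new work is needed beyond linking the hypothesis ``$(G,k)$ is a yes-instance'' to an upper bound on $|M|$.

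First I would observe that, since $(G,k)$ is a yes-instance of \textsc{$\mathcal{H}$-free Edge Deletion}, by the problem's definition there exists some $E'\subseteq E(G)$ with $|E'|\leq k$ such that $G\setminus E'$ is $\mathcal{H}$-free. Thus $E'$ is itself an $\mathcal{H}$ deletion set of $G$, and the minimum $\mathcal{H}$ deletion set $M$ therefore satisfies $|M|\leq |E'|\leq k$.

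Next I would invoke Lemma~\ref{lem:lower}, which applies since $G$ has maximum degree at most $\Delta>0$, to get $l_M\leq 1+\log_{\frac{2\Delta}{2\Delta-1}}|M|$. Because $\frac{2\Delta}{2\Delta-1}>1$, the function $\log_{\frac{2\Delta}{2\Delta-1}}(\cdot)$ is monotonically increasing, so substituting the bound $|M|\leq k$ yields $l_M\leq 1+\log_{\frac{2\Delta}{2\Delta-1}}k$, as claimed.

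There is no genuine obstacle here; the only thing to be careful about is that the corollary bounds the depth in terms of the parameter $k$ rather than $|M|$, so one must explicitly justify the monotonicity step (hence the remark that the base of the logarithm exceeds $1$). Everything else is immediate from the definitions and Lemma~\ref{lem:lower}.
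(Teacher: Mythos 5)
Your proposal is correct and is exactly the intended argument: the paper states this as an immediate corollary of Lemma~\ref{lem:lower} (giving no separate proof), relying precisely on the observation that a yes-instance yields $|M|\leq k$ together with monotonicity of $\log_{\frac{2\Delta}{2\Delta-1}}(\cdot)$. Your explicit justification of the monotonicity step is a welcome touch but does not change the route.
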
\qed

\begin{lemma}\label{lem:gensafe}
Let $\mathcal{H}$ be a set of connected graphs with diameter at most $D$. Let $V'\supseteq V_{\mathcal{H}}(G)$ and let $c\geq 0$.
Let $G'$ be obtained by removing vertices of $G$ at a distance more than $c+D$ from $V'$. 
Furthermore, assume that if $G'$ is a yes-instance then there exists an MHDS $M'$ of $G'$ such that every vertex in $V_{M'}$ is at a distance at most $c$ from  $V'$ in $G'$. Then $(G,k)$ is a 
yes-instance if and only if $(G',k)$ is a yes-instance of \textsc{$\mathcal{H}$-free Edge Deletion}.
\end{lemma}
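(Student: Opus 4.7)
The strategy is to prove the two directions of the biconditional separately, using the edge set $M'$ supplied by the hypothesis as the bridge between $G$ and $G'$.

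\emph{Forward direction.} Given any solution $E'$ to $(G,k)$, I set $E'' := E' \cap E(G')$; clearly $|E''| \leq k$. Any purported induced copy of some $H \in \mathcal{H}$ in $G' \setminus E''$ has its vertex set $V''$ inside $V(G')$, and for such $V''$ the induced edge sets of $G \setminus E'$ and $G' \setminus E''$ coincide, so such a copy would also appear in $G \setminus E'$, contradicting the assumption that $E'$ is an HDS of $G$.

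\emph{Backward direction.} Assume $(G',k)$ is a yes-instance. By hypothesis, fix an MHDS $M'$ of $G'$ whose incident vertex set $V_{M'}$ lies within distance $c$ of $V'$ in $G'$; since $G'$ admits an HDS of size at most $k$ and $M'$ is minimum, $|M'| \leq k$. I claim $M'$ is also an HDS for $G$. Suppose for contradiction that some $V'' \subseteq V(G)$ induces an $H \in \mathcal{H}$ in $G \setminus M'$. If no edge of $M'$ lies between vertices of $V''$, then $V''$ already induces $H$ in $G$, hence $V'' \subseteq V_{\mathcal{H}}(G) \subseteq V'$, so we may take any $v_0 \in V''$ at distance $0$ from $V'$ in $G$. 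Otherwise some edge $e \in M'$ has both endpoints in $V''$, and each endpoint lies in $V_{M'}$, hence at distance at most $c$ from $V'$ in $G'$, and therefore at distance at most $c$ from $V'$ in $G$ as well (any $G'$-path is a $G$-path); pick such an endpoint as $v_0$. Because $H$ is connected with diameter at most $D$, every vertex of $V''$ is within distance $D$ of $v_0$ in the induced copy, hence in $G \setminus M'$, and hence in $G$. By the triangle inequality, $V''$ sits within distance $c + D$ of $V'$ in $G$, so $V'' \subseteq V(G')$. But then $G[V''] = G'[V'']$ and $V''$ induces $H$ in $G' \setminus M'$ as well, contradicting that $M'$ is an HDS of $G'$.

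The step I expect to require the most care is the last distance-tracking argument in the backward direction: one must keep the distances in $G$, $G'$, and $G \setminus M'$ carefully separated, and crucially exploit the localized MHDS provided by the hypothesis, so that every putative induced copy in $G \setminus M'$ can be pulled back entirely inside $V(G')$, where it would contradict $M'$ being an HDS of $G'$.
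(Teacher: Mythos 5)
Your proof is correct and follows essentially the same route as the paper's: the forward direction restricts a solution to $E(G')$, and the backward direction shows the localized MHDS $M'$ is an HDS of $G$ by locating a vertex of any putative copy near $V'$ and using the diameter bound plus the triangle inequality to pull the whole copy back into $V(G')$. The only cosmetic difference is that you inline a two-case argument where the paper invokes its Proposition~2 (together with the observation that $G$ and $G'$ have the same induced copies of graphs in $\mathcal{H}$) to produce the anchor vertex in $V_{M'}$.
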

\begin{proof}
Let $G$ be a yes-instance with an MHDS $M$. Then $M'=M\cap E(G')$ is an HDS of $G'$ such that $|M'|\leq k$. Conversely, let $G'$ be a yes-instance. By the assumption, there exists an
MHDS $M'$ of $G'$ such that every vertex in $V_{M'}$ is at a distance at most $c$ from $V'$ in $G'$. 
We claim that $M'$ is an MHDS of $G$. For contradiction, assume $G\setminus M'$ has an induced $H\in \mathcal{H}$ with a vertex set $V''$. As $G$ and $G'$ has same set of 
induced copies of graphs in $\mathcal{H}$, at least one edge 
in every induced copy of graphs in $\mathcal{H}$ in $G$ is in $M'$. Then, by Proposition~\ref{prop:stilleasy}, at least one vertex in $V''$ is in $V_{M'}$. 
We observe that for every vertex in $G'$ the distance from $V'$ is same in $G$ and $G'$. Hence every vertex in $V''$ is at a distance at most $c+D$ from $V'$ in $G$. Then, $V''$ induces a copy of $H$ in $G'\setminus M'$ which is a 
contradiction. 
\end{proof}\qed

\begin{lemma}\label{lem:genbound}
Let $G$ be a graph and let $d>1$ be a constant. Let $V'\subseteq V(G)$ such that all vertices in $G$ with degree more than $d$ is in $V'$. Partition $V'$ into $V_1$ and $V_2$ such that $V_1$ contains all the vertices in $V'$ with degree at most $d$ and $V_2$ contains all the vertices with degree more than $d$. If every vertex in $G$ is at a distance at most $c>0$ from  $V'$, then $|V(G)|\leq |V_1|\cdot d^{c+1} + |N_G(V_2)|\cdot d^{c}$.
\end{lemma}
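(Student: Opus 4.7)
The plan is to perform a BFS from $V'$. For $0 \le i \le c$, let $L_i$ be the set of vertices of $G$ at distance exactly $i$ from $V'$, so $V(G) = \bigcup_{i=0}^{c} L_i$ and $L_0 = V_1 \cup V_2$. The strategy is to bound $|L_1|$ sharply in terms of $|V_1|$ and $|N_G(V_2)|$ and then propagate outwards, using that every vertex outside $V'$ (in particular, every vertex of $L_1 \cup \cdots \cup L_c$) has degree at most $d$ by hypothesis.

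For the propagation, each vertex of $L_i$ with $i \ge 2$ is adjacent to some vertex of $L_{i-1} \subseteq V(G) \setminus V'$, and such vertices have degree at most $d$; hence $|L_i| \le d \, |L_{i-1}|$. For $L_1$, I would split the edges leaving $V'$ according to their endpoint in $V'$: those incident to $V_1$ contribute at most $d|V_1|$ new vertices (bounded degree), while those incident to $V_2$ land in $N_G(V_2)$ by definition, contributing at most $|N_G(V_2)|$ more. Thus $|L_1| \le d|V_1| + |N_G(V_2)|$, hence $|L_i| \le d^{i-1}\bigl(d|V_1| + |N_G(V_2)|\bigr)$, and summing the geometric series over $i = 1, \dots, c$ gives a bound on $\sum_{i \ge 1} |L_i|$ that falls short of $d^{c+1}|V_1| + d^c |N_G(V_2)|$ by some slack available from $d \ge 2$.

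The main obstacle is absorbing $|L_0| = |V_1| + |V_2|$ into the final bound, since vertices of $V_2$ can have arbitrarily large degree and so are not a priori controlled by $|N_G(V_2)|$. The $|V_1|$ part is harmless (trivially $|V_1| \le d^{c+1}|V_1|$), so the real task is to show the auxiliary inequality $|V_2| \le |N_G(V_2)|$. I would prove this by partitioning $V_2 = A \cup B$ with $A = V_2 \cap N_G(V_2)$ and $B = V_2 \setminus N_G(V_2)$; trivially $|A| \le |N_G(V_2) \cap V_2|$, and every $v \in B$ has all of its $\ge d+1$ neighbors in $N_G(V_2) \setminus V_2$, which is a set of vertices each of degree $\le d$ (since $V_2$ contains every vertex of degree exceeding $d$). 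A double edge count between $B$ and $N_G(V_2) \setminus V_2$ then yields $(d+1)|B| \le d \, |N_G(V_2) \setminus V_2|$, so $|B| \le |N_G(V_2) \setminus V_2|$, and adding gives $|V_2| \le |N_G(V_2)|$. Plugging this into the BFS estimate and consuming the slack from the geometric sum (routine arithmetic for $d \ge 2$) gives the claimed bound $|V(G)| \le |V_1| \cdot d^{c+1} + |N_G(V_2)| \cdot d^c$. The hard part is genuinely the auxiliary inequality; once it is in hand, everything else is a standard BFS growth calculation.
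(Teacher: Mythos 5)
Your proof is correct, and at its core it is the same breadth-first growth count as the paper's: the paper simply says to grow $d$-ary BFS trees of depth $c$ from the vertices of $V_1$ and of depth $c-1$ from the vertices of $N_G[V_2]$, giving $|V(G)|\leq |V_1|\frac{d^{c+1}-1}{d-1}+|N_G[V_2]|\frac{d^c-1}{d-1}$, which matches your layer-by-layer estimate $|L_i|\leq d|L_{i-1}|$ after splitting $L_1$ according to whether the parent lies in $V_1$ or $V_2$. The genuine difference is your auxiliary inequality $|V_2|\leq|N_G(V_2)|$. The paper sidesteps the need for it by rooting trees at the \emph{closed} neighbourhood $N_G[V_2]\supseteq V_2$ (note its proof and its later application in Lemma~\ref{lem:stbound}, where $|N[V_R(G)]|$ appears, both use the closed neighbourhood, even though the statement is written with $N_G(V_2)$); as a result the paper's argument really establishes the bound with $|N_G[V_2]|$ in place of $|N_G(V_2)|$. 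Your double-counting step --- every $v\in V_2\setminus N_G(V_2)$ sends its more than $d$ edges into $N_G(V_2)\setminus V_2$, whose members have degree at most $d$ because $V_2$ contains \emph{all} high-degree vertices, whence $(d+1)|B|\leq d\,|N_G(V_2)\setminus V_2|$ --- is sound and buys you the statement exactly as written, with the open neighbourhood; the price is that the final arithmetic is tight (at $d=2$ the coefficient of $|N_G(V_2)|$ meets $d^c$ with equality), whereas the closed-neighbourhood version has slack. Both arguments, yours and the paper's, implicitly need $d\geq 2$ for the step $\frac{d^{c+1}-1}{d-1}\leq d^{c+1}$, which is harmless since $d$ is an integer degree bound in every application.
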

\begin{proof}
To enumerate the number of vertices in $G$, consider the $d$-ary breadth first trees rooted at vertices in $V_1$ and in $N_G[V_2]$. 
\begin{align*}
|V(G')| &\leq |V_1|\left(\dfrac{d^{c+1}-1}{d-1}\right) + |N_G[V_2]|\left(\dfrac{d^{c}-1}{d-1}\right)\\
  &\leq |V_1|d^{c+1} + |N_G[V_2]|d^{c}
\end{align*}
\end{proof}\qed

\section{Polynomial Kernels}
In this section, we assume that $\mathcal{H}$ is a fixed finite set of connected graphs with diameter at most $D$. First we devise an algorithm to obtain polynomial kernel for \textsc{$\mathcal{H}$-free Edge Deletion} for bounded degree input graphs. Then we prove a stronger result - a polynomial kernel for $K_t$-free input graphs (for some fixed $t>2$) when $\mathcal{H}$ contains $K_{1,s}$ for some $s>1$. 

We assume that the input graph $G$ has maximum degree at most $\Delta>1$ and $G$ has at least
one induced copy of $H$. We observe that if these conditions are not met, obtaining polynomial kernel is trivial.

Now we state the kernelization rule which is the single rule in the kernelization. 
\begin{description}
\item[Rule 0:] Delete all vertices in $G$ at a distance more than $(1+\log_{\frac{2\Delta}{2\Delta-1}}k)D$ from $V_{\mathcal{H}}(G)$.
\end{description}

We note that the rule can be applied efficiently with the help of breadth first search from vertices in $V_{\mathcal{H}}(G)$. Now we prove the 
safety of the rule.

\begin{lemma}\label{lem:safe}
Rule $0$ is safe.
\end{lemma}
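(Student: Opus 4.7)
The plan is to invoke Lemma~\ref{lem:gensafe} with $V' = V_{\mathcal{H}}(G)$ and $c = D \log_{\frac{2\Delta}{2\Delta-1}} k$, so that the cutoff distance $c + D$ matches exactly the threshold $(1+\log_{\frac{2\Delta}{2\Delta-1}}k)D$ appearing in Rule~0. The containment $V' \supseteq V_{\mathcal{H}}(G)$ and the condition $c \geq 0$ are immediate, so the only real work is to verify the conditional statement about the reduced graph $G'$.

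To that end, I would assume $(G', k)$ is a yes-instance and let $M'$ be any MHDS of $G'$. Two observations make everything line up. First, since $G'$ is an induced subgraph of $G$, its maximum degree is still at most $\Delta$, so Corollary~\ref{cor:l} applies inside $G'$ and yields $l_{M'} \leq 1 + \log_{\frac{2\Delta}{2\Delta-1}} k$. Second, any induced copy of $H \in \mathcal{H}$ in $G'$ is also an induced copy in $G$, hence $V_{\mathcal{H}}(G') \subseteq V_{\mathcal{H}}(G) = V'$. Applying Lemma~\ref{lem:dist} to $G'$ tells me every vertex of $V_{M'}$ is within distance $(l_{M'}-1)D$ of $V_{\mathcal{H}}(G')$ in $G'$, which is at most $D \log_{\frac{2\Delta}{2\Delta-1}} k = c$; combined with $V_{\mathcal{H}}(G') \subseteq V'$, this gives exactly the required distance bound from $V'$ in $G'$.

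I expect the argument to be essentially routine once the parameter choice $c = D\log_{\frac{2\Delta}{2\Delta-1}} k$ is pinned down, since each earlier ingredient (Lemma~\ref{lem:dist}, Corollary~\ref{cor:l}, Lemma~\ref{lem:gensafe}) does exactly one step of the work and they compose cleanly. The only mild subtlety is remembering that distances and the set $V_{\mathcal{H}}$ can shrink when passing from $G$ to $G'$, so one must be careful to apply Lemma~\ref{lem:dist} inside $G'$ (not $G$) and note the inclusion $V_{\mathcal{H}}(G') \subseteq V'$; the degenerate case where $G'$ is already $\mathcal{H}$-free is handled automatically because then the unique MHDS is empty and $V_{M'} = \emptyset$, making the distance condition vacuous.
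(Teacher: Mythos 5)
Your proposal is correct and follows essentially the same route as the paper: apply Lemma~\ref{lem:dist} and Corollary~\ref{cor:l} to an MHDS of $G'$ to get the distance bound $D\log_{\frac{2\Delta}{2\Delta-1}}k$, then invoke Lemma~\ref{lem:gensafe} with $V'=V_{\mathcal{H}}(G)$ and $c=D\log_{\frac{2\Delta}{2\Delta-1}}k$. You merely spell out the details (the inclusion $V_{\mathcal{H}}(G')\subseteq V_{\mathcal{H}}(G)$, the preserved degree bound, the $\mathcal{H}$-free degenerate case) that the paper leaves implicit.
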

\begin{proof}
Let $G'$ be obtained from $G$ by applying Rule $0$. Let $M'$ be an MHDS of $G'$. If $G'$ is a yes-instance, then by Lemma~\ref{lem:dist} and Corollary~\ref{cor:l},
every vertex in $V_{M'}$ is at a distance at most $D\log_{\frac{2\Delta}{2\Delta-1}}k$ from $V_{\mathcal{H}}(G')$. 
Hence, we can apply Lemma~\ref{lem:gensafe} with $V'=V_{\mathcal{H}}(G)$ and $c=D\log_{\frac{2\Delta}{2\Delta-1}}k$. 
\end{proof}\qed

\begin{lemma}\label{lem:bound}
Let $(G,k)$ be a yes-instance of \textsc{$\mathcal{H}$-free Edge Deletion}. Let $G'$ be obtained by one application of Rule 0 on $G$. Then, $|V(G')|\leq (2\Delta^{2D+1}\cdot k^{pD+1})$ where $p=\log_{\frac{2\Delta}{2\Delta-1}}\Delta$.
\end{lemma}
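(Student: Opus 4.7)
The plan is to bound $|V(G')|$ by applying Lemma~\ref{lem:genbound} to $G'$ with $V'$ taken to be the set of endpoints of a small $\mathcal{H}$ deletion set of $G'$. I first note that $V_{\mathcal{H}}(G)\subseteq V(G')$, since such vertices are at distance $0$ from $V_{\mathcal{H}}(G)$ and thus survive Rule~$0$. Hence every induced copy of $H\in\mathcal{H}$ in $G$ is preserved in $G'$; every induced copy in $G'$ is also one in $G$ (as $G'$ is an induced subgraph of $G$); therefore $V_{\mathcal{H}}(G') = V_{\mathcal{H}}(G)$ and $(G',k)$ is a yes-instance. Fix an MHDS $M'$ of $G'$, so $|M'|\leq k$.

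The second step is a distance-chaining argument. By Rule~$0$, every $v\in V(G')$ is within distance $(1+\log_{\frac{2\Delta}{2\Delta-1}}k)D$ of $V_{\mathcal{H}}(G)$ in $G$, and since the interior vertices of any such shortest path are strictly closer to $V_{\mathcal{H}}(G)$ they also survive Rule~$0$, so the same distance bound holds in $G'$. Each vertex of $V_{\mathcal{H}}(G')$ lies in some induced $H\in\mathcal{H}$ whose diameter is at most $D$ and which must contain at least one edge of $M'$, so it is within distance $D$ of $V_{M'}$ in $G'$. Concatenation gives that every vertex of $V(G')$ is within distance $(2+\log_{\frac{2\Delta}{2\Delta-1}}k)D$ of $V_{M'}$ in $G'$.

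Finally, I would apply Lemma~\ref{lem:genbound} to $G'$ with $V'=V_{M'}$, $d=\Delta$, and $c=(2+\log_{\frac{2\Delta}{2\Delta-1}}k)D$. Since $\Delta(G')\leq\Delta$, we have $V_2=\emptyset$ and $V_1=V_{M'}$, so $|V(G')|\leq|V_{M'}|\cdot\Delta^{c+1}$. Using $|V_{M'}|\leq 2|M'|\leq 2k$ and the identity $\Delta^{\log_{\frac{2\Delta}{2\Delta-1}}k} = k^{\log_{\frac{2\Delta}{2\Delta-1}}\Delta} = k^p$, the right-hand side simplifies to $2k\cdot\Delta^{2D+1}\cdot k^{pD}=2\Delta^{2D+1}\cdot k^{pD+1}$, matching the claim. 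The only subtle point is the chaining step, since Lemma~\ref{lem:genbound} measures distances in $G'$ rather than in $G$; this is handled by the monotonicity observation that shortest paths from surviving vertices to $V_{\mathcal{H}}(G)$ cannot traverse a vertex removed by Rule~$0$.
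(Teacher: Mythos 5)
Your proof is correct and follows essentially the same route as the paper's: chain the Rule~0 distance bound with the observation that every vertex of $V_{\mathcal{H}}$ lies within distance $D$ of the endpoints of a small deletion set (since every induced $H$ must contain a deleted edge), then invoke Lemma~\ref{lem:genbound} with $d=\Delta$ and at most $2k$ root vertices. The only cosmetic difference is that you root the count at $V_{M'}$ for an MHDS $M'$ of $G'$ (which forces you to first note that $(G',k)$ is a yes-instance), whereas the paper roots it at $V_{M_1}$ for an MHDS $M$ of $G$ itself.
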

\begin{proof}
Let $M$ be an MHDS of $G$ such that $|M|\leq k$. We observe that every vertex in 
$V_{\mathcal{H}}(G)$ is at a distance at most $D$ from $V_{M_1}$ in $G$. Hence, by construction, every vertex in $G'$ is at a distance at most 
$(2+\log_{\frac{2\Delta}{2\Delta-1}}k)D$ from $V_{M_1}$ in $G$ and in $G'$. We note that $|V_{M_1}|\leq 2k$. To enumerate the number of vertices in $G'$, we apply Lemma~\ref{lem:genbound} with
$V'=V_{M_1}$, $c=(2+\log_{\frac{2\Delta}{2\Delta-1}}k)D$ and $d=\Delta$.
\begin{align*}
|V(G')| &\leq 2k\Delta^{(2+\log_{\frac{2\Delta}{2\Delta-1}}k)D+1}\\
  &\leq 2\Delta^{2D+1}\cdot k^{pD+1}
\end{align*}
\end{proof}\qed

Now we present the algorithm to obtain a polynomial kernel. The algorithm applies Rule 0 on the input graph and according to the number of vertices in the resultant graph it returns the resultant graph or a trivial no-instance.

\vspace{0.1cm}

\fbox{\parbox{11cm}{
Kernelization for \textsc{$\mathcal{H}$-free Edge Deletion}\\
($\mathcal{H}$ is a finite set of connected graphs with maximum diameter $D$)\\
Input:$(G,k)$ where $G$ has maximum degree at most $\Delta$.
\begin{description}
\item[1.] Apply Rule 0 on $G$ to obtain $G'$.
\item[2.] If the number of vertices in $G'$ is more than $2\Delta^{2D+1}\cdot k^{pD+1}$ where $p=\log_{\frac{2\Delta}{2\Delta-1}}\Delta$, then return
a trivial no-instance $(H,0)$ where $H$ is the graph with minimum number of vertices in $\mathcal{H}$. Else return $(G',k)$.
\end{description}
}}

\begin{theorem}\label{thm:correct}
The kernelization for \textsc{$\mathcal{H}$-free Edge Deletion} returns a kernel with the number of vertices at most $2\Delta^{2D+1}\cdot k^{pD+1}$ where $p=\log_{\frac{2\Delta}{2\Delta-1}}\Delta$.
\end{theorem}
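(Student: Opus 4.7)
The plan is to verify the two requirements of a polynomial kernel for the algorithm stated above: the claimed size bound on the output, and correctness (output is a yes-instance iff the input is), plus a quick remark on polynomial running time.

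The size bound is immediate from inspection of the algorithm. In the ``else'' branch the algorithm returns $(G',k)$ precisely when $|V(G')|\le 2\Delta^{2D+1}\cdot k^{pD+1}$. In the other branch it returns $(H,0)$ for a fixed $H\in\mathcal{H}$ of minimum size; since $\mathcal{H}$ is a fixed finite set, $|V(H)|$ is a constant, well below $2\Delta^{2D+1}\cdot k^{pD+1}$ for all $k\ge 1$.

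For correctness I would split into two cases by which branch is taken. If the algorithm outputs $(G',k)$, then Lemma~\ref{lem:safe} (safety of Rule 0) gives directly that $(G,k)$ is a yes-instance iff $(G',k)$ is. If instead the algorithm outputs $(H,0)$, I would first observe that $(H,0)$ is indeed a no-instance, because $H$ contains itself as an induced subgraph and $0$ edge deletions are allowed. It remains to show that $(G,k)$ is also a no-instance in this case. Suppose for contradiction that $(G,k)$ is a yes-instance. Then Lemma~\ref{lem:bound} applied to the graph $G'$ produced by Rule 0 gives $|V(G')|\le 2\Delta^{2D+1}\cdot k^{pD+1}$, contradicting the condition that triggered this branch. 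Hence $(G,k)$ is a no-instance, so returning a no-instance is correct.

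Finally, for polynomial running time: $V_{\mathcal{H}}(G)$ can be computed in time $n^{O(1)}$ by enumerating all vertex subsets of size $|V(H)|$ for each $H\in\mathcal{H}$ (constant sized, since $\mathcal{H}$ is fixed and finite) and testing induced isomorphism; Rule 0 is then applied by a breadth-first search from $V_{\mathcal{H}}(G)$, and the cardinality check is trivial. The main technical work has already been done in Lemmas~\ref{lem:safe} and~\ref{lem:bound}, so the only subtlety is to remember to check that $(H,0)$ is genuinely a no-instance so that the no-instance branch is both syntactically and semantically valid.
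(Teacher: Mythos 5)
Your proposal is correct and follows essentially the same route as the paper, whose proof is just the one-line observation that the theorem follows from Lemma~\ref{lem:safe}, Lemma~\ref{lem:bound}, and the fact that the trivial no-instance is within the size bound; your case split (safety for the ``else'' branch, the contrapositive of Lemma~\ref{lem:bound} for the no-instance branch) is exactly the intended argument, merely spelled out. The extra checks you add --- that $(H,0)$ really is a no-instance and that the rule runs in polynomial time --- are sound and only make the argument more complete.
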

\begin{proof}
Follows from Lemma~\ref{lem:safe} and Lemma~\ref{lem:bound} and the observation that the number of vertices in the trivial no-instance is at most $2\Delta^{2D+1}\cdot k^{pD+1}$.
\end{proof}\qed

\subsection{A stronger result for a restricted case}
Here we give a polynomial kernel for \textsc{$\mathcal{H}$-free Edge Deletion} when $\mathcal{H}$ is a fixed finite set of connected graphs and contains a $K_{1,s}$ for some $s> 1$ and when the input graphs are $K_t$-free, for any fixed 
$t> 2$. 

It is proved in \cite{le2008stable} that the maximum degree of a $\{\text{claw},K_4\}$-free graph is at most 5. We give a straight forward
 generalization of this result for $\{K_{1,s},K_t\}$-free graphs. Let $R(s,t)$ denote the Ramsey number. 
Remember that the Ramsey number $R(s,t)$ is the least integer such that every graph on $R(s,t)$ vertices has either an independent set of order
$s$ or a complete subgraph of order $t$.

\begin{lemma}\label{lem:ramsey}
For integers $s> 1,t> 1$, any $\{K_{1,s}, K_t\}$-free graph has maximum degree at most $R(s,t-1)-1$.
\end{lemma}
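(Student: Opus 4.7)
The plan is to argue by contradiction, directly invoking the definition of the Ramsey number on the neighborhood of a hypothetical high-degree vertex.

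First I would suppose, for contradiction, that $G$ is a $\{K_{1,s}, K_t\}$-free graph containing a vertex $v$ whose degree is at least $R(s, t-1)$. I would then look at the set $N(v)$ of neighbors of $v$, which has size at least $R(s, t-1)$, and apply the defining property of the Ramsey number to the subgraph $G[N(v)]$ induced by this neighborhood: it must contain either an independent set of size $s$ or a clique of size $t-1$.

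Next I would rule out both cases. If $G[N(v)]$ contains an independent set $I$ of size $s$, then $v$ together with $I$ induces a $K_{1,s}$ in $G$ (since $v$ is adjacent to every vertex of $I$ by construction, and the vertices of $I$ are pairwise nonadjacent), contradicting $K_{1,s}$-freeness. If $G[N(v)]$ contains a clique $K$ of size $t-1$, then $\{v\} \cup K$ induces a $K_t$ (since $v$ is adjacent to all of $K$ and $K$ is already a clique), contradicting $K_t$-freeness. Either outcome of Ramsey's theorem therefore contradicts our hypothesis on $G$, so no such vertex $v$ exists and the maximum degree is at most $R(s, t-1) - 1$.

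There is no real obstacle here; the lemma is essentially the neighborhood formulation of Ramsey's theorem, and the only thing to be careful about is the parameter $t-1$ (rather than $t$) in the clique side, which arises because adjoining $v$ upgrades a clique of size $t-1$ inside $N(v)$ to a $K_t$ in $G$.
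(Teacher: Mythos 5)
Your proof is correct and follows exactly the same route as the paper's: assume a vertex $v$ of degree at least $R(s,t-1)$, apply the definition of the Ramsey number to $N(v)$ to find either an independent set of size $s$ (giving an induced $K_{1,s}$ with $v$) or a clique of size $t-1$ (giving a $K_t$ with $v$), contradicting $\{K_{1,s},K_t\}$-freeness. You simply spell out the two cases in slightly more detail than the paper does.
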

\begin{proof}
Assume $G$ is $\{K_{1,s}, K_t\}$-free. For contradiction, assume $G$ has a vertex $v$ of degree at least $R(s,t-1)$. By the definition of the Ramsey number there exist at least $s$ mutually non-adjacent vertices or $t-1$ mutually adjacent vertices in the neighborhood of $v$. Hence
there exist either an induced $K_{1,s}$ or an induced $K_t$ in $G$.
\end{proof}\qed

We modify the proof technique used for devising polynomial kernelization for \textsc{$\mathcal{H}$-free Edge Deletion} for bounded degree graphs to obtain polynomial
kernelization for $K_t$-free input graphs for the case when $\mathcal{H}$ contains $K_{1,s}$ for some $s>1$. 

Let $s> 1$ be the least integer such that $\mathcal{H}$ contains $K_{1,s}$. Let $t> 2$, $G$ be $K_{t}$-free and $M$ be an MHDS of $G$. Let $d=R(s,t-1)-1$. Let $D$ be the maximum diameter of graphs in $\mathcal{H}$. We define the following.

\begin{description}
\item[$M_0=$]$\{e: e\in M\ \text{and}\ e\ \text{is incident to a vertex with degree at least}\ d+1\}$.
\item[$V_R(G)=$]$\{v: v\in V(G)\ \text{and}\ v\ \text{has degree at least $d+1$ in $G$}\}$.
\end{description}

\begin{lemma}\label{lem:max}
$G\setminus M_0$ has degree at most $d$ and every vertex in $G$ with degree at least $d+1$ is incident to at least one edge in $M_0$.
\end{lemma}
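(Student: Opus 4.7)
The plan is to prove the two assertions together via a Ramsey-style argument applied to the neighborhood of any vertex of degree at least $d+1$. A useful preliminary observation, which I would state right at the start, is that by the definition of $M_0$, whenever $v$ is a vertex of $G$ with $\deg_G(v)\geq d+1$, every edge of $M$ incident to $v$ already lies in $M_0$. In particular, such a vertex has the same neighborhood in $G\setminus M_0$ as in $G\setminus M$, i.e.\ $N_{G\setminus M_0}(v)=N_{G\setminus M}(v)$. This identification is what lets me swap between ``$M_0$'' (the object of the lemma) and ``$M$'' (the object for which we have the HDS property).

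For the first assertion, I would argue by contradiction. Suppose some vertex $v$ has degree at least $d+1=R(s,t-1)$ in $G\setminus M_0$. Then $\deg_G(v)\geq d+1$ as well, and by the preliminary observation $|N_{G\setminus M}(v)|\geq R(s,t-1)$. Look at the subgraph of $G\setminus M$ induced on $N_{G\setminus M}(v)$. By Ramsey's theorem it must contain either an independent set of size $s$ or a clique of size $t-1$. The latter, together with $v$, would give a $K_t$ in $G\setminus M\subseteq G$, contradicting the $K_t$-freeness of $G$. So there is an independent set of size $s$ among $v$'s neighbors in $G\setminus M$, and this set together with $v$ forms an induced $K_{1,s}$ in $G\setminus M$, contradicting $M$ being an HDS (since $K_{1,s}\in\mathcal{H}$). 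This proves the maximum-degree bound on $G\setminus M_0$.

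The second assertion is then immediate: if $v\in V_R(G)$ has degree $\geq d+1$ in $G$ but $\leq d$ in $G\setminus M_0$ (by the first part), then at least one edge incident to $v$ has been removed, and that edge lies in $M_0$.

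I do not foresee a real obstacle here; the only point that requires care is the observation $N_{G\setminus M_0}(v)=N_{G\setminus M}(v)$ for high-degree $v$, which is what allows Ramsey to be applied inside $G\setminus M$ (where the $\mathcal{H}$-freeness assumption has content) rather than inside $G\setminus M_0$ (where it does not).
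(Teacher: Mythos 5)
Your proof is correct and takes essentially the same route as the paper's: both rest on the observation that every $M$-edge incident to a vertex of degree at least $d+1$ lies in $M_0$ (so such a vertex has the same neighbourhood in $G\setminus M_0$ as in $G\setminus M$), combined with the fact that $G\setminus M$ is $\{K_{1,s},K_t\}$-free and hence has maximum degree at most $d$. The only difference is cosmetic: the paper cites its Lemma~\ref{lem:ramsey} for the degree bound, whereas you re-derive the Ramsey argument inline.
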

\begin{proof}
As $G\setminus M$ is $\{K_{1,s}, K_t\}$-free and every edge in $M$ which is incident to at least one vertex of degree at least $d+1$ is in $M_0$, the result follows from Lemma~\ref{lem:ramsey}.
\end{proof}\qed

\begin{lemma}\label{lem:stdist}
Let $M$ be an MHDS of $G$. Let $M'=M\setminus M_0$ and $G'=G\setminus M_0$. Then, $M'$ is an MHDS of $G'$ and every vertex in $V_M$ is at a distance at most $Dl_{M'}$ from $V_{\mathcal{H}}(G)\cup V_R(G)$ in $G$ .
\end{lemma}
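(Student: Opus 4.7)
The plan is to establish the two assertions separately: that $M'$ is an MHDS of $G'$, and the distance bound for every vertex of $V_M$. For the latter I would split $V_M$ as $V_{M_0}\cup V_{M'}$ and handle the two pieces by different arguments.

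For the MHDS claim, since $M_0\subseteq M$ I would first observe that $G'\setminus M'=(G\setminus M_0)\setminus(M\setminus M_0)=G\setminus M$, which is $\mathcal{H}$-free by the minimality of $M$, so $M'$ is already an HDS of $G'$. Minimality then follows by contradiction: any HDS $M''$ of $G'$ with $|M''|<|M'|$ would make $M_0\cup M''$ an HDS of $G$ (because $G\setminus(M_0\cup M'')=G'\setminus M''$ is $\mathcal{H}$-free) of cardinality at most $|M_0|+|M''|<|M_0|+|M'|=|M|$, contradicting that $M$ is an MHDS.

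For the distance bound, vertices in $V_{M_0}$ are straightforward: by the very definition of $M_0$, each such vertex is either itself in $V_R(G)$ or adjacent (through an edge of $M_0$) to one, so it is at $G$-distance at most $1$ from $V_R(G)\subseteq V_{\mathcal{H}}(G)\cup V_R(G)$, which is at most $Dl_{M'}$ whenever $l_{M'}\geq 1$. For $v\in V_{M'}$, I would apply Lemma~\ref{lem:dist} to the MHDS $M'$ of $G'$, which gives $G'$-distance (and therefore $G$-distance, as $G'$ is a spanning subgraph of $G$) at most $(l_{M'}-1)D$ from $V_{\mathcal{H}}(G')$.

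The one place requiring care --- and the main obstacle --- is that $V_{\mathcal{H}}(G')$ need not sit inside $V_{\mathcal{H}}(G)$, since deleting edges of $M_0$ can \emph{create} new induced copies of graphs in $\mathcal{H}$. I would bridge this gap by showing every $u\in V_{\mathcal{H}}(G')$ lies within $G$-distance at most $D$ of $V_{\mathcal{H}}(G)\cup V_R(G)$: pick an induced $H'\in\mathcal{H}$ in $G'$ containing $u$ with vertex set $V_{H'}$; if $G[V_{H'}]=H'$ then $u\in V_{\mathcal{H}}(G)$, and otherwise any extra edges of $G[V_{H'}]$ lie in $M_0$ and each supplies an endpoint from $V_R(G)$ inside $V_{H'}$, which sits at $H'$-distance (and hence $G$-distance) at most $D$ from $u$. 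Composing with the previous step, every $v\in V_{M'}$ is within $G$-distance $(l_{M'}-1)D+D=Dl_{M'}$ of $V_{\mathcal{H}}(G)\cup V_R(G)$, completing the proof.
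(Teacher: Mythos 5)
Your proof is correct and follows essentially the same route as the paper's: apply Lemma~\ref{lem:dist} to the MHDS $M'$ of $G'$, then bridge $V_{\mathcal{H}}(G')$ back to $V_{\mathcal{H}}(G)\cup V_R(G)$ within distance $D$ using the fact that any induced copy created by deleting $M_0$ must contain an endpoint of an $M_0$-edge, hence a vertex of $V_R(G)$. You merely spell out the steps the paper leaves as ``straightforward'' (the minimality of $M'$, the handling of $V_{M_0}$ via $M=M'\cup M_0$), and your explicit caveat that the bound on $V_{M_0}$ needs $l_{M'}\geq 1$ is a small point the paper glosses over but which is harmless where the lemma is applied.
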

\begin{proof}
It is straight forward to verify that $M'$ is an MHDS of $G'$. 
By Lemma~\ref{lem:dist}, every vertex in $V_{M'}$ is at a distance at most $(l_{M'}-1)D$ from $V_{\mathcal{H}}(G')$ in $G'$. Every induced $H\in\mathcal{H}$ in $G'$ is either an induced $H$ in $G$ or formed by deleting $M_0$ from $G$. Therefore, every vertex in
$V_{\mathcal{H}}(G')$ is at a distance at most $D$ from $V_{\mathcal{H}}(G)\cup V_R(G)$ in $G'$. Hence, every vertex in $V_{M'}$ is at a distance at most $Dl_{M'}$ from  $V_{\mathcal{H}}(G)\cup V_R(G)$ in $G'$. The result follows from the fact $M=M'\cup M_0$.
\end{proof}\qed

The single rule in the kernelization is:

\begin{description}
\item[Rule 1:] Delete all vertices in $G$ at a distance more than $(2+\log_{\frac{2d}{2d-1}}k)D$ from $V_{\mathcal{H}}(G)\cup V_R(G)$ where 
$d=R(s,t-1)-1$.
\end{description}

\begin{lemma}\label{lem:stsafe}
Rule $1$ is safe.
\end{lemma}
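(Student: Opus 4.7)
The plan is to mirror the proof of Lemma~\ref{lem:safe}, but with Lemma~\ref{lem:stdist} replacing Lemma~\ref{lem:dist} and with the Ramsey-based degree bound $d = R(s,t-1)-1$ playing the role of $\Delta$. Let $G'$ be obtained from $G$ by one application of Rule~1. The forward direction ($G$ yes-instance gives $G'$ yes-instance) is immediate from the first half of Lemma~\ref{lem:gensafe}, so I focus on the converse: assuming $G'$ is a yes-instance, fix an MHDS $M$ of $G'$ with $|M|\leq k$ and verify that Lemma~\ref{lem:gensafe} applies.

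First I would apply Lemma~\ref{lem:stdist} to $G'$ and $M$, splitting off the edges $M_0\subseteq M$ incident to vertices of $G'$ of degree at least $d+1$. Writing $M'=M\setminus M_0$, the lemma says $M'$ is an MHDS of $G'\setminus M_0$ and every vertex of $V_M$ lies within distance $Dl_{M'}$ of $V_{\mathcal{H}}(G')\cup V_R(G')$ in $G'$. Next, since $G'\setminus M_0$ has maximum degree at most $d$ by Lemma~\ref{lem:max} and $|M'|\leq|M|\leq k$, Corollary~\ref{cor:l} applied with $\Delta=d$ gives $l_{M'}\leq 1+\log_{\frac{2d}{2d-1}}k$. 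Combining, every vertex of $V_M$ is within distance $D\bigl(1+\log_{\frac{2d}{2d-1}}k\bigr)$ of $V_{\mathcal{H}}(G')\cup V_R(G')$ in $G'$.

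Then I would transfer this distance bound from the sets defined with respect to $G'$ to the sets with respect to $G$ (which is what Rule~1 uses). Removing vertices can only decrease degrees and can only destroy induced copies, so $V_{\mathcal{H}}(G')\subseteq V_{\mathcal{H}}(G)$ and, on the common vertex set, $V_R(G')\subseteq V_R(G)$. Hence every vertex of $V_M$ is within distance $c:=D\bigl(1+\log_{\frac{2d}{2d-1}}k\bigr)$ of $V':=V_{\mathcal{H}}(G)\cup V_R(G)$ in $G'$.

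Finally I would invoke Lemma~\ref{lem:gensafe} with this $V'$ and $c$, noting $V'\supseteq V_{\mathcal{H}}(G)$ and $c+D = D\bigl(2+\log_{\frac{2d}{2d-1}}k\bigr)$, which is exactly the threshold in Rule~1. All hypotheses of Lemma~\ref{lem:gensafe} are met, so $(G,k)$ and $(G',k)$ are equivalent instances. The only delicate point — and what I expect to be the main obstacle in writing this cleanly — is bookkeeping the three graphs ($G$, $G'$, and $G'\setminus M_0$) and checking that the monotonicity of $V_{\mathcal{H}}$ and $V_R$ under vertex deletion is used correctly, so that the Lemma~\ref{lem:stdist} bound stated in terms of $G'$-objects can legitimately be combined with Rule~1's use of $G$-objects.
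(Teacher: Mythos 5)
Your proposal is correct and follows essentially the same route as the paper's proof: apply Lemma~\ref{lem:stdist} together with Corollary~\ref{cor:l} (justified via the degree bound of Lemma~\ref{lem:max}) to bound the distance from $V_{M}$ to $V_{\mathcal{H}}(G')\cup V_R(G')$, then invoke Lemma~\ref{lem:gensafe} with $V'=V_{\mathcal{H}}(G)\cup V_R(G)$ and $c=D\bigl(1+\log_{\frac{2d}{2d-1}}k\bigr)$. Your bookkeeping is in fact slightly more careful than the paper's, which simply asserts $V_{\mathcal{H}}(G)=V_{\mathcal{H}}(G')$ and $V_R(G)=V_R(G')$ where your one-sided containments already suffice.
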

\begin{proof}
Let $G'$ be obtained from $G$ by applying Rule $1$. Let $M'$ be an MHDS of $G'$. If $G'$ is a yes-instance, then by Lemma~\ref{lem:stdist} and Corollary~\ref{cor:l}, every vertex in $V_{M'}$ is at a distance at most 
$D(1+\log_{\frac{2d}{2d-1}}k)$ from $V_{\mathcal{H}}(G')\cup V_R(G')$ in $G'$. We note that $V_{\mathcal{H}}(G)=V_{\mathcal{H}}(G')$ and $V_R(G)=V_R(G')$. Hence, we can apply Lemma~\ref{lem:gensafe} with $V'=V_{\mathcal{H}}(G)\cup V_R(G)$, $c=D(1+\log_{\frac{2d}{2d-1}}k)$ and $d=R(s,t-1)-1$. 
\end{proof}\qed

\begin{lemma}\label{lem:stbound}
Let $(G,k)$ be a yes-instance of \textsc{$\mathcal{H}$-free Edge Deletion} where $G$ is $K_t$-free. Let $G'$ be obtained by one application of Rule 1 on $G$. Then, $|V(G')|\leq 8d^{3D+1}\cdot k^{pD+1}$ where $p=\log_{\frac{2d}{2d-1}}d$.
\end{lemma}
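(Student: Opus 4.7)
The plan is to mirror the proof of Lemma~\ref{lem:bound}, applying Lemma~\ref{lem:genbound} to $G'$ with a seed set $V'$ of size $O(k)$; the new ingredient is that $V'$ must absorb the high-degree vertices in $V_R(G)$. Fix an MHDS $M$ of $G$ with $|M|\leq k$, and recall the sets $M_0$ and $G^{*}=G\setminus M_0$ defined for this subsection. By Lemma~\ref{lem:max}, $G^{*}$ has maximum degree at most $d$ and every vertex of $V_R(G)$ is incident to an edge of $M_0$, so $V_R(G)\subseteq V_{M_0}$ and $|V_R(G)|\leq 2|M_0|\leq 2k$. Set $V'=V_{M_1}\cup V_R(G)$; then $|V'|\leq 4k$, and since $V_R(G')\subseteq V_R(G)\subseteq V'$, the set $V'$ contains every vertex of $G'$ whose degree in $G'$ exceeds $d$.

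I apply Lemma~\ref{lem:genbound} to $G'$ with this $V'$ and $c=(3+\log_{(2d)/(2d-1)}k)D$. The distance hypothesis is verified by combining (i) Rule~1, which guarantees every vertex of $G'$ is within distance $(2+\log_{(2d)/(2d-1)}k)D$ of $V_{\mathcal{H}}(G)\cup V_R(G)$ in $G$, and this bound carries over to distances in $G'$ since the rule's cutoff retains every vertex lying on such a shortest path; together with (ii) the facts that $V_R(G)\subseteq V'$ and every vertex of $V_{\mathcal{H}}(G)$ lies in some induced $H\in\mathcal{H}$ containing an $M_1$-edge, giving a path of length at most $D$ inside $H\subseteq G'$ from that vertex to $V_{M_1}\subseteq V'$.

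Lemma~\ref{lem:genbound} then partitions $V'$ into a low-degree part $V_1$ with $|V_1|\leq 4k$ and a high-degree part $V_2\subseteq V_R(G)$. The main obstacle is bounding $|N_{G'}[V_2]|$, which is a priori uncontrolled because $V_R(G)$ is exactly the set of high-degree vertices in $G$. The remedy is to pass to $G^{*}$: since $G^{*}$ has maximum degree at most $d$, we have $|N_{G^{*}}[V_R(G)]|\leq (d+1)|V_R(G)|\leq 2k(d+1)$, and reinstating the $M_0$-edges to recover $G$ can only add endpoints of $M_0$-edges, contributing at most $2|M_0|\leq 2k$ extra vertices, so $|N_{G'}[V_2]|\leq 2k(d+2)$. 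Plugging these bounds into Lemma~\ref{lem:genbound} gives $|V(G')|\leq 4k\cdot d^{c+1}+2k(d+2)\,d^c\leq 8k\,d^{c+1}$ for $d\geq 2$; using $d^{c+1}=d^{3D+1}k^{pD}$ with $p=\log_{(2d)/(2d-1)}d$ yields the required bound $8d^{3D+1}\cdot k^{pD+1}$.
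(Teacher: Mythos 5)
Your proof is correct and follows essentially the same route as the paper's: apply Lemma~\ref{lem:genbound} to $G'$ with $V'=V_{M_1}\cup V_R(G)$ and $c=(3+\log_{\frac{2d}{2d-1}}k)D$, bounding $|N[V_R(G)]|\leq 2k(d+2)$ via Lemma~\ref{lem:max}. The only differences are cosmetic: you spell out the neighbourhood bound and the transfer of distances from $G$ to $G'$ more explicitly than the paper does, and you use the slightly looser $|V_1|\leq 4k$ in place of $2k$, which still lands within the stated bound since Lemma~\ref{lem:genbound} already requires $d>1$.
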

\begin{proof}
Let $M$ be an MHDS of $G$ such that $|M|\leq k$. We observe that every vertex in $V_{\mathcal{H}}(G)$ is at a distance at most $D$ from $V_{M_1}$ in $G$. Hence, by construction, every vertex in $G'$ is at a distance at most
$D(3+\log_{\frac{2d}{2d-1}}k)$ from  $V_{M_1}\cup V_R(G)$. Clearly $|V_{M_1}|\leq 2k$. Using Lemma~\ref{lem:max} we obtain $|N[V_R(G)]|\leq 2k(d+2)$. To enumerate the number of vertices in $G'$, we apply Lemma~\ref{lem:genbound} with
$V'=V_{M_1}\cup V_R(G)$, $c=D(3+\log_{\frac{2d}{2d-1}}k)$ and $d=R(s,t-1)-1$.
\begin{align*}
|V(G')| &\leq 2kd^{D(3+\log_{\frac{2d}{2d-1}}k)+1}+2k(d+2)d^{D(3+\log_{\frac{2d}{2d-1}}k)}\\
  &\leq 8d^{3D+1}\cdot k^{pD+1}
\end{align*}
\end{proof}\qed

Now we present the algorithm.
\vspace{0.1cm}

\fbox{\parbox{11cm}{
Kernelization for \textsc{$\mathcal{H}$-free Edge Deletion}\\
($\mathcal{H}$ contains $K_{1,s}$ for some $s>1$)\\
Input:$(G,k)$ where $G$ is $K_{t}$-free for some fixed $t>2$.\\
Let $s>1$ be the least integer such that $\mathcal{H}$ contains $K_{1,s}$.
\begin{description}
\item[1.] Apply Rule 1 on $G$ to obtain $G'$.
\item[2.] If the number of vertices in $G'$ is more than $8d^{3D+1}\cdot k^{pD+1}$ where $d=R(s,t-1)-1$ and $p=\log_{\frac{2d}{2d-1}}d$, then return
a trivial no-instance $(K_{1,s},0)$. Else return $(G',k)$.
\end{description}
}}
\vspace{0.1cm}

For practical implementation, we can use any specific known upper bound for $R(s,t-1)$ or the general upper bound 
$\binom{s+t-3}{s-1}$.
\begin{theorem}\label{thm:stcorrect}
The kernelization for \textsc{$\mathcal{H}$-free Edge Deletion} when $K_{1,s}\in \mathcal{H}$ and the input graph is $K_t$-free returns a kernel with the number of vertices at most $8d^{1+3D}\cdot k^{1+pD}$ where $d=R(s,t-1)-1$ and $p=\log_{\frac{2d}{2d-1}}d$.
\end{theorem}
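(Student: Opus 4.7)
The plan is to assemble Theorem~\ref{thm:stcorrect} directly from the two preceding lemmas about Rule~1, splitting into cases based on Step~2 of the kernelization algorithm. First I would invoke Lemma~\ref{lem:stsafe}: applying Rule~1 to $G$ produces a graph $G'$ such that $(G,k)$ and $(G',k)$ are equivalent as instances of $\mathcal{H}$-free Edge Deletion. This takes care of correctness whenever the algorithm returns $(G',k)$.

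Next I would split on the outcome of Step~2. In the case where the algorithm returns $(G',k)$, the size bound $|V(G')|\le 8d^{3D+1}\cdot k^{pD+1}$ is the explicit threshold used in the test, so there is nothing further to check. In the case where $|V(G')|$ exceeds this threshold, Lemma~\ref{lem:stbound} (applied in contrapositive form) forces $(G,k)$ to be a no-instance, and the algorithm returns the trivial instance $(K_{1,s},0)$. Since $K_{1,s}\in\mathcal{H}$, this returned instance is itself a no-instance: the graph $K_{1,s}$ contains itself as an induced subgraph, and with budget $0$ we cannot delete any edges. Thus equivalence is preserved.

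The only remaining item is to confirm the returned trivial no-instance also respects the claimed vertex bound, i.e.\ that $s+1\le 8d^{1+3D}\cdot k^{1+pD}$. Since we may assume $k\ge 1$ (otherwise the input graph must already be $\mathcal{H}$-free, a case handled trivially) and since $d=R(s,t{-}1)-1\ge s\ge 2$ with $D\ge 1$, we have $8d^{1+3D}\cdot k^{1+pD}\ge 8d^{4}\ge 8s^{4}\ge s+1$, which is the needed estimate.

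I expect the proof itself to be essentially a one-line combination of Lemma~\ref{lem:stsafe} and Lemma~\ref{lem:stbound}; the only mildly fiddly point is the side-verification that the trivial fallback instance $(K_{1,s},0)$ satisfies the output size bound, but this reduces to the crude inequality above and is not a genuine obstacle.
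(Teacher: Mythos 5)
Your proposal is correct and follows essentially the same route as the paper, whose proof is literally the one-line combination of Lemma~\ref{lem:stsafe} and Lemma~\ref{lem:stbound} that you anticipated; your extra check that the fallback instance $(K_{1,s},0)$ fits the size bound is a detail the paper performs for Theorem~\ref{thm:correct} but silently omits here. (Minor caveat: your intermediate claim $d\ge s\ge 2$ can fail, e.g.\ $t=3$ gives $d=R(s,2)-1=s-1$, but the final inequality $s+1\le 8d^{1+3D}k^{1+pD}$ still holds for $k\ge 1$ in the regimes where the kernelization applies.)
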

\begin{proof}
Follows from Lemma~\ref{lem:stsafe} and Lemma~\ref{lem:stbound}.
\end{proof}\qed

It is known that line graphs are characterized by a finite set of connected forbidden induced subgraphs including a claw ($K_{1,3}$). Both
\textsc{Claw-free Edge Deletion} and \textsc{Line Edge Deletion} are NP-complete even for $K_4$-free graphs\cite{yannakakis1981edge}.
\begin{corollary}
\textsc{Claw-free Edge Deletion} and \textsc{Line Edge Deletion} admit polynomial kernels for $K_t$-free input graphs for any fixed 
$t>3$.
\end{corollary}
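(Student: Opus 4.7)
The plan is to present both problems as instances of the setup of Theorem~\ref{thm:stcorrect}, so the corollary follows with no additional work beyond identifying the forbidden family $\mathcal{H}$ and verifying that it meets the theorem's hypotheses (finite, connected, and contains some $K_{1,s}$ with $s>1$).

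For \textsc{Claw-free Edge Deletion} this is immediate: take $\mathcal{H}=\{K_{1,3}\}$, which is a finite set consisting of a single connected graph, and $K_{1,3}=K_{1,s}$ with $s=3>1$. For \textsc{Line Edge Deletion} I would invoke Beineke's theorem, already flagged in the paragraph just before the corollary, to write $\mathcal{H}$ as the finite list of nine connected forbidden induced subgraphs characterizing line graphs; since the claw $K_{1,3}$ is among them, we again have $K_{1,s}\in\mathcal{H}$ with $s=3$.

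With these choices, the hypotheses of Theorem~\ref{thm:stcorrect} are met for every input that is $K_t$-free with $t>2$, hence in particular for every fixed $t>3$ as required. The theorem then produces a kernel on at most $8d^{3D+1}\cdot k^{pD+1}$ vertices, where $d=R(3,t-1)-1$ and $p=\log_{\frac{2d}{2d-1}} d$ are constants depending only on $t$, and $D$ is the maximum diameter in $\mathcal{H}$ (namely $D=2$ for the claw case and a fixed constant bounded by the diameter of the largest Beineke graph for the line case). This is polynomial in $k$, which is exactly what the corollary claims.

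There is no real obstacle to surmount: the only non-routine ingredient is knowing that line graphs admit a finite connected forbidden-subgraph characterization containing a claw, which is a classical result and is explicitly stated in the sentence preceding the corollary. Everything else is bookkeeping, so the proof can be written in a single sentence citing Theorem~\ref{thm:stcorrect}.
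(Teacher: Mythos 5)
Your proposal is correct and matches the paper's (essentially unwritten) argument: the corollary is stated with no proof beyond the preceding remark that line graphs have a finite connected forbidden-subgraph characterization containing the claw, and it follows directly from Theorem~\ref{thm:stcorrect} exactly as you describe. The only difference is that you spell out the bookkeeping (choice of $\mathcal{H}$, verification of the hypotheses, the resulting bound) that the paper leaves implicit.
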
\qed

We observe that the kernelization for \textsc{$\mathcal{H}$-free Edge Deletion} when $K_{1,s}\in\mathcal{H}$ and the input graph is $K_t$-free works
for the case when $K_t\in \mathcal{H}$ and the input graph is $K_{1,s}$-free.

\begin{theorem}
\textsc{$\mathcal{H}$-free Edge Deletion} admits polynomial kernelization when $\mathcal{H}$ is a finite set of connected graphs, $K_t\in\mathcal{H}$ for some $t>2$ and the input graph is $K_{1,s}$-free for some fixed $s>1$.
\end{theorem}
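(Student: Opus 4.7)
The plan is to mirror Subsection~3.1 with the roles of $K_{1,s}$ and $K_t$ exchanged; this is driven by the fact that Lemma~\ref{lem:ramsey} is completely symmetric in these two graphs, asserting that any $\{K_{1,s},K_t\}$-free graph has maximum degree at most $d:=R(s,t-1)-1$.

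First I set $d:=R(s,t-1)-1$ and $V_R(G):=\{v\in V(G):v\text{ has degree at least }d+1\text{ in }G\}$. For $v\in V_R(G)$ the induced subgraph $G[N_G(v)]$ has at least $R(s,t-1)$ vertices and, since $G$ is $K_{1,s}$-free, no independent set of size $s$; Ramsey then forces an induced $K_{t-1}$ in $N_G(v)$ and hence an induced $K_t$ in $G$ through $v$. Since $K_t\in\mathcal{H}$, this yields the crucial containment $V_R(G)\subseteq V_\mathcal{H}(G)$. For an MHDS $M$ of $G$ I will define $M_0:=\{e\in M:e\text{ is incident to a vertex of }V_R(G)\}$ and establish the symmetric analog of Lemma~\ref{lem:max}: $G\setminus M_0$ has maximum degree at most $d$, and every vertex of degree at least $d+1$ in $G$ is incident to at least one edge of $M_0$. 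With this analog in hand, the statements and proofs of the symmetric analogs of Lemma~\ref{lem:stdist}, Lemma~\ref{lem:stsafe} and Lemma~\ref{lem:stbound} transfer verbatim; the same kernelization rule -- delete every vertex of $G$ at distance more than $(2+\log_{\frac{2d}{2d-1}}k)D$ from $V_\mathcal{H}(G)\cup V_R(G)$ -- is safe and yields a polynomial kernel with at most $8d^{1+3D}k^{1+pD}$ vertices, where $p=\log_{\frac{2d}{2d-1}}d$.

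The main obstacle will be the symmetric analog of Lemma~\ref{lem:max}. In Subsection~3.1 its proof used that $G\setminus M$ is $\{K_{1,s},K_t\}$-free, which followed from $G$ being $K_t$-free together with $K_{1,s}\in\mathcal{H}$. In the present setting $G\setminus M$ is $K_t$-free (because $K_t\in\mathcal{H}$), but its $K_{1,s}$-freeness is not automatic: deleting edges from a $K_{1,s}$-free graph can introduce new induced copies of $K_{1,s}$. The fix will be to observe that any induced $K_{1,s}$ in $G\setminus M$ with center $v$ and leaves $u_1,\dots,u_s$ forces, by $K_{1,s}$-freeness of $G$, at least one edge $u_iu_j$ to lie in $M$; folding these ``inter-leaf'' $M$-edges into an enlarged $M_0$ restores the bounded-degree property of $G\setminus M_0$ without violating $|M_0|\leq|M|\leq k$ or the inclusion $V_{M_0}\subseteq N_G[V_R(G)]$ on which the subsequent distance arguments rely. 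Once this adjustment is carried out, the remainder of the proof is a direct transcription of the $K_{1,s}\in\mathcal{H}$ case.
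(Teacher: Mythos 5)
You have correctly spotted the one place where the claimed symmetry between ``$K_{1,s}\in\mathcal{H}$ and $G$ is $K_t$-free'' and ``$K_t\in\mathcal{H}$ and $G$ is $K_{1,s}$-free'' breaks down, namely the analogue of Lemma~\ref{lem:max}: in the new setting $G\setminus M$ is $K_t$-free but need not be $K_{1,s}$-free, so Lemma~\ref{lem:ramsey} no longer bounds its maximum degree. (The paper itself offers nothing here beyond the assertion that the earlier kernelization ``works'', so isolating this obstacle is already more than the paper does.) The obstacle is real: take $\mathcal{H}=\{K_3\}$ and $G=K_n$, which is $K_{1,2}$-free; every MHDS leaves the graph $K_{\lfloor n/2\rfloor,\lceil n/2\rceil}$, whose maximum degree $\lceil n/2\rceil$ is unbounded, whereas $d=R(2,2)-1=1$.

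However, your proposed repair cannot work. Any candidate $M_0$ is a subset of $M$, so $G\setminus M\subseteq G\setminus M_0$ and hence $\deg_{G\setminus M_0}(v)\geq\deg_{G\setminus M}(v)$ for every vertex $v$: enlarging $M_0$ can lower degrees only down to, never below, their values in $G\setminus M$. In particular the ``inter-leaf'' edges $u_iu_j$ are not even incident to the star centre $v$, so folding them into $M_0$ leaves $\deg_{G\setminus M_0}(v)$ unchanged, and in the $K_n$ example above no subset of $M$ whatsoever yields a graph $G\setminus M_0$ of maximum degree at most $d$. Consequently the chain Lemma~\ref{lem:step} $\Rightarrow$ Corollary~\ref{cor:l} $\Rightarrow$ the logarithmic bound on the depth of $M\setminus M_0$, which is exactly what makes the deletion rule safe in Lemma~\ref{lem:stsafe}, is not available, and your verbatim transcription of Lemmas~\ref{lem:stdist}--\ref{lem:stbound} does not go through. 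A secondary problem: the inter-leaf edges need not lie in $N_G[V_R(G)]$, since the centre of an induced $K_{1,s}$ in $G\setminus M$ need only have degree $s$, which is in general far below $d+1=R(s,t-1)$, so the inclusion $V_{M_0}\subseteq N_G[V_R(G)]$ that your distance argument relies on is also lost. Your observation that $V_R(G)\subseteq V_{\mathcal{H}}(G)$ in this setting is correct and worth keeping, but a complete proof needs a genuinely different way to bound the depth $l_M$ (or the distance of $V_M$ from $V_{\mathcal{H}}(G)$) that does not pass through a degree bound on $G\setminus M$.
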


\section{Concluding Remarks} Our results may give some insight towards a dichotomy theorem on incompressibility of \textsc{$\mathcal{H}$-free Edge Deletion} raised as an open problem in \cite{cai2013incompressibility}.
We conclude with an open problem:
does \textsc{$\mathcal{H}$-free Edge Deletion} admit polynomial kernel for planar input graphs?
\bibliographystyle{plain}
\bibliography{ed}

\begin{thebibliography}{10}

\bibitem{alon2009additive}
Noga Alon, Asaf Shapira, and Benny Sudakov.
\newblock Additive approximation for edge-deletion problems.
\newblock {\em Annals of mathematics}, 170(1):371--411, 2009.

\bibitem{brugmann2009generating}
Daniel Br{\"u}gmann, Christian Komusiewicz, and Hannes Moser.
\newblock On generating triangle-free graphs.
\newblock {\em Electronic Notes in Discrete Mathematics}, 32:51--58, 2009.

\bibitem{cai1996fixed}
Leizhen Cai.
\newblock Fixed-parameter tractability of graph modification problems for
  hereditary properties.
\newblock {\em Information Processing Letters}, 58(4):171--176, 1996.

\bibitem{cai2013incompressibility}
Leizhen Cai and Yufei Cai.
\newblock Incompressibility of {$H$}-free edge modification.
\newblock In {\em Parameterized and Exact Computation}, pages 84--96. Springer,
  2013.

\bibitem{cai2012polynomial}
Yufei Cai.
\newblock Polynomial kernelisation of {$H$}-free edge modification problems.
\newblock Master's thesis, Department of Computer Science and Engineering, The
  Chinese University of Hong Kong, Hong Kong SAR, China, 2012.

\bibitem{downey2013fundamentals}
Rodney~G Downey and Michael~R Fellows.
\newblock {\em Fundamentals of Parameterized Complexity}.
\newblock Springer, 2013.

\bibitem{el1988complexity}
Ehab~S El-Mallah and Charles~J Colbourn.
\newblock The complexity of some edge deletion problems.
\newblock {\em Circuits and Systems, IEEE Transactions on}, 35(3):354--362,
  1988.

\bibitem{garey1974some}
Michael~R Garey, David~S Johnson, and Larry Stockmeyer.
\newblock Some simplified {NP}-complete problems.
\newblock In {\em Proceedings of the sixth annual ACM symposium on Theory of
  computing}, pages 47--63. ACM, 1974.

\bibitem{goldberg1995four}
Paul~W Goldberg, Martin~C Golumbic, Haim Kaplan, and Ron Shamir.
\newblock Four strikes against physical mapping of dna.
\newblock {\em Journal of Computational Biology}, 2(1):139--152, 1995.

\bibitem{gramm2003graph}
Jens Gramm, Jiong Guo, Falk H{\"u}ffner, and Rolf Niedermeier.
\newblock Graph-modeled data clustering: Fixed-parameter algorithms for clique
  generation.
\newblock In {\em Algorithms and Complexity}, pages 108--119. Springer, 2003.

\bibitem{paul2012non}
Sylvain Guillemot, Christophe Paul, and Anthony Perez.
\newblock On the (non-) existence of polynomial kernels for {$P_l$}-free edge
  modification problems.
\newblock {\em Algorithmica}, 65(4):900--926, 2012.

\bibitem{guo2007problem}
Jiong Guo.
\newblock Problem kernels for {NP}-complete edge deletion problems: split and
  related graphs.
\newblock In {\em Algorithms and Computation}, pages 915--926. Springer, 2007.

\bibitem{hadlock1975finding}
F~Hadlock.
\newblock Finding a maximum cut of a planar graph in polynomial time.
\newblock {\em SIAM Journal on Computing}, 4(3):221--225, 1975.

\bibitem{karp72}
Richard~M. Karp.
\newblock Reducibility among combinatorial problems.
\newblock In {\em Complexity of Computer Computations}, pages 85--103, 1972.

\bibitem{komusiewicz2011alternative}
Christian Komusiewicz and Johannes Uhlmann.
\newblock Alternative parameterizations for cluster editing.
\newblock In {\em SOFSEM 2011: Theory and Practice of Computer Science}, pages
  344--355. Springer, 2011.

\bibitem{kratsch2009two}
Stefan Kratsch and Magnus Wahlstr{\"o}m.
\newblock Two edge modification problems without polynomial kernels.
\newblock In {\em Parameterized and Exact Computation}, pages 264--275.
  Springer, 2009.

\bibitem{le2008stable}
Van~Bang Le, Raffaele Mosca, and Haiko M{\"u}ller.
\newblock On stable cutsets in claw-free graphs and planar graphs.
\newblock {\em Journal of Discrete Algorithms}, 6(2):256--276, 2008.

\bibitem{lewis1980node}
John~M Lewis and Mihalis Yannakakis.
\newblock The node-deletion problem for hereditary properties is {NP}-complete.
\newblock {\em Journal of Computer and System Sciences}, 20(2):219--230, 1980.

\bibitem{open2013worker}
{\L}ukasz~Kowalik Marek~Cygan and Marcin Pilipczuk.
\newblock Open problems from workshop on kernels.
\newblock {\em Worker 2013}, 2013.

\bibitem{margot1994some}
Francois Margot.
\newblock Some complexity results about threshold graphs.
\newblock {\em Discrete Applied Mathematics}, 49(1):299--308, 1994.

\bibitem{natanzon2001complexity}
Assaf Natanzon, Ron Shamir, and Roded Sharan.
\newblock Complexity classification of some edge modification problems.
\newblock {\em Discrete Applied Mathematics}, 113(1):109--128, 2001.

\bibitem{shamir2004cluster}
Ron Shamir, Roded Sharan, and Dekel Tsur.
\newblock Cluster graph modification problems.
\newblock {\em Discrete Applied Mathematics}, 144(1):173--182, 2004.

\bibitem{yannakakis1981edge}
Mihalis Yannakakis.
\newblock Edge-deletion problems.
\newblock {\em SIAM Journal on Computing}, 10(2):297--309, 1981.

\end{thebibliography}
%
\end{document}